\title{Throughput Maximization in Wireless Powered Communication Networks}
\author{Hyungsik Ju and Rui Zhang
        \thanks{H. Ju is with the Department of Electrical and Computer Engineering, National University of Singapore (e-mail: elejhs@nus.edu.sg).}
       \thanks{R. Zhang is with the Department of Electrical and Computer Engineering, National University of Singapore (e-mail: elezhang@nus.edu.sg). He is also with the Institute for Infocomm Research, A*STAR, Singapore.}
       }
\begin{document}

\maketitle \thispagestyle{empty}

\setlength{\baselineskip}{1.3\baselineskip}
\newtheorem{definition}{\underline{Definition}}[section]
\newtheorem{fact}{Fact}
\newtheorem{assumption}{Assumption}
\newtheorem{theorem}{\underline{Theorem}}[section]
\newtheorem{lemma}{\underline{Lemma}}[section]
\newtheorem{corollary}{\underline{Corollary}}[section]
\newtheorem{proposition}{\underline{Proposition}}[section]
\newtheorem{example}{\underline{Example}}[section]
\newtheorem{remark}{\underline{Remark}}[section]
\newtheorem{algorithm}{\underline{Algorithm}}[section]
\newcommand{\mv}[1]{\mbox{\boldmath{$ #1 $}}}

\begin{abstract}
This paper studies the newly emerging wireless powered communication network in which one hybrid access point (H-AP) with constant power supply coordinates the wireless energy/information transmissions to/from a set of distributed users that do not have other energy sources. A ``harvest-then-transmit'' protocol is proposed where all users first harvest the wireless energy broadcast by the H-AP in the downlink (DL) and then send their independent information to the H-AP in the uplink (UL) by time-division-multiple-access (TDMA). First, we study the sum-throughput maximization of all users by jointly optimizing the time allocation for the DL wireless power transfer versus the users' UL information transmissions given a total time constraint based on the users' DL and UL channels as well as their average harvested energy values. By applying convex optimization techniques, we obtain the closed-form expressions for the optimal time allocations to maximize the sum-throughput. Our solution reveals an interesting ``doubly near-far'' phenomenon due to both the DL and UL distance-dependent signal attenuation, where a far user from the H-AP, which receives less wireless energy than a nearer user in the DL, has to transmit with more power in the UL for reliable information transmission. As a result, the maximum sum-throughput is shown to be achieved by allocating substantially more time to the near users than the far users, thus resulting in unfair rate allocation among different users. To overcome this problem, we furthermore propose a new performance metric so-called common-throughput with the additional constraint that all users should be allocated with an equal rate regardless of their distances to the H-AP. We present an efficient algorithm to solve the common-throughput maximization problem. Simulation results demonstrate the effectiveness of the common-throughput approach for solving the new doubly near-far problem in wireless powered communication networks.
\end{abstract}

\begin{keywords}
Wireless power, energy harvesting, throughput maximization, doubly near-far problem, TDMA, convex optimization.
\end{keywords}

\section{Introduction}
Traditionally, energy-constrained wireless networks, such as sensor networks, are powered by fixed energy sources, e.g. batteries, which have limited operation time. Although the lifetime of the network can be extended by replacing or recharging the batteries, it may be inconvenient, costly, dangerous (e.g., in a toxic environment) or even impossible (e.g., for sensors implanted in human bodies). As an alternative solution to prolong the network's lifetime, energy harvesting has recently drawn significant interests since it potentially provides unlimited power supplies to wireless networks by scavenging energy from the environment.

In particular, radio signals radiated by ambient transmitters become a viable new source for wireless energy harvesting. It has been reported that $3.5$mW and $1$uW of wireless power can be harvested from radio-frequency (RF) signals at distances of $0.6$ and $11$ meters, respectively, using Powercast RF energy-harvester operating at $915$MHz \cite{Zungeru}. Furthermore, recent advance in designing highly efficient rectifying antennas will enable more efficient wireless energy harvesting from RF signals in the near future \cite{Vullers}. It is worth noting that there has been recently a growing interest in studying wireless powered communication networks (WPCNs), where energy harvested from ambient RF signals is used to power wireless terminals in the network, e.g., \cite{Shi}-\cite{Lee}. In \cite{Shi}, a wireless powered sensor network was investigated, where a mobile charging vehicle moving in the network is employed as the energy transmitter to wirelessly power the sensor nodes. In \cite{Huang}, the wireless powered cellular network was studied in which dedicated power-beacons are deployed in the cellular network to charge mobile terminals. Moreover, the wireless powered cognitive radio network has been considered in \cite{Lee}, where active primary users are utilized as energy transmitters for charging their nearby secondary users that are not allowed to transmit over the same channel due to strong interference. Furthermore, since radio signals carry energy as well as information at the same time, a joint investigation of simultaneous wireless information and power transfer (SWIPT) has recently drawn a significant attention (see e.g. \cite{Varshney2008}-\cite{Fouladgar} and the references therein).

In this paper, we study a new type of WPCN as shown in Fig. \ref{Fig_SystemModel}, in which one hybrid access point (H-AP) with constant power supply (e.g. battery) coordinates the wireless energy/information transmissions to/from a set of distributed users that are assumed to have no other energy sources. All users are each equipped with a rechargeable battery and thus can harvest and store the wireless energy broadcast by the H-AP. Unlike prior works on SWIPT \cite{Varshney2008}-\cite{Fouladgar}, which focused on the simultaneous energy and information transmissions to users in the downlink (DL), in this paper we consider a different setup where the H-AP broadcasts only wireless energy to all users in the DL while the users transmit their independent information using their individually harvested energy to the H-AP in the uplink (UL). We are interested in maximizing the UL throughput of the aforementioned WPCN by optimally allocating the time for the DL wireless energy transfer (WET) by the H-AP and the UL wireless information transmissions (WITs) by different users.

\begin{figure}
   \centering
   \includegraphics[width=0.7\columnwidth]{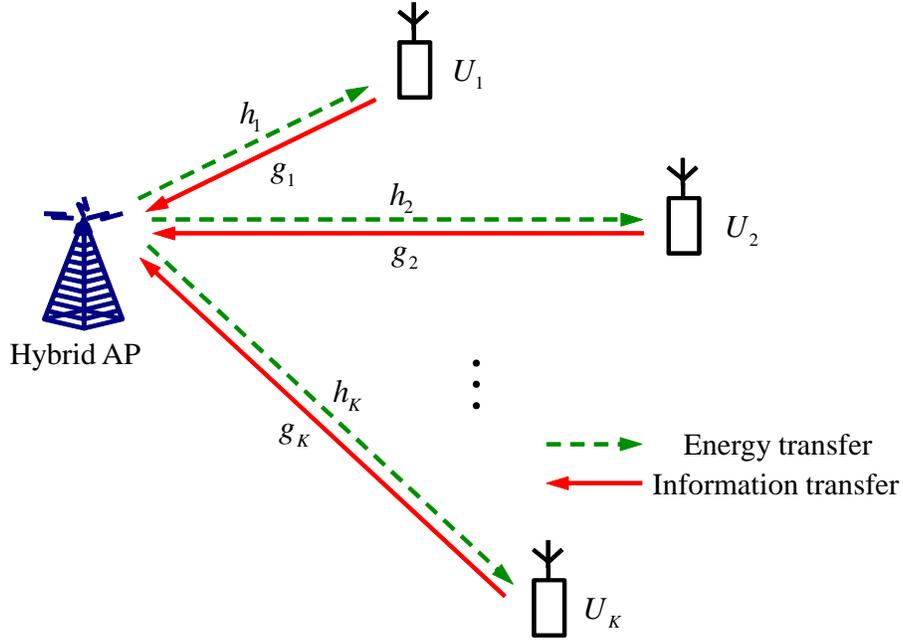}
   \caption{A wireless powered communication network (WPCN) with wireless energy transfer (WET) in the downlink (DL) and wireless information transmissions (WITs) in the uplink (UL).}
   \label{Fig_SystemModel}
\end{figure}

The main contributions of this paper are summarized as follows:

\begin{itemize}
   \item We propose a protocol termed ``harvest-then-transmit'' for the WPCN depicted in Fig. \ref{Fig_SystemModel}, where the H-AP first broadcasts wireless energy to all users in the DL, and then the users transmit their independent information to the H-AP in the UL using their individually harvested energy by time-division-multiple-access (TDMA).

   \item With the proposed protocol, we first maximize the sum-throughput of the WPCN by jointly optimizing the time allocated to the DL WET and the UL WITs given a total time constraint, based on the users' DL and UL channels as well as their average harvested energy amount. It is shown that the sum-throughput maximization problem is convex, and therefore we derive closed-form expressions for the optimal time allocations by applying convex optimization techniques \cite{Boyd}.

   \item Our solution reveals an interesting new ``doubly near-far'' phenomenon in the WPCN, when a far user from the H-AP receives less amount of wireless energy than a nearer user in the DL, but has to transmit with more power in the UL for achieving the same information rate due to the doubly distance-dependent signal attenuation in both the DL WET and UL WIT. Consequently, the sum-throughput maximization solution is shown to allocate substantially more time to the near users than the far users, thus resulting in unfair achievable rates among different users.

   \item To overcome the doubly near-far problem, we furthermore propose a new performance metric referred to as \emph{common-throughput} with the additional constraint that all users should be allocated with an equal rate in their UL WITs regardless of their distances to the H-AP. We propose an efficient algorithm to maximize the common-throughput of the WPCN by re-optimizing the time allocated for the DL WET and UL WITs. By comparing the maximum sum- versus common-throughput, we characterize the fundamental throughput-fairness trade-offs in a WPCN.
\end{itemize}

The rest of this paper is organized as follows. Section \ref{SystemModel} presents the WPCN model and the proposed harvest-then-transmit protocol. Section \ref{SumThroughputMax} studies the sum-throughput maximization problem, and characterizes the doubly near-far phenomenon. Section \ref{CommonThroughputMax} formulates the common-throughput maximization problem and presents an efficient algorithm to solve it. Section \ref{SimulationResult} presents simulation results on the sum-throughput versus common-throughput comparison. Finally, Section \ref{Conclusion} concludes the paper.

\section{System Model}\label{SystemModel}
As shown in Fig. \ref{Fig_SystemModel}, this paper considers a WPCN with WET in the DL and WITs in the UL. The network consists of one H-AP and $K$ users (e.g., sensors) denoted by $U_i$, $i = 1, \,\, \cdots, \,\, K$. It is assumed that the H-AP and all user terminals are equipped with one single antenna each. It is further assumed that the H-AP and all the users operate over the same frequency band. In addition, all user terminals are assumed to have no other embedded energy sources; thus, the users need to harvest energy from the received signals broadcast by the H-AP in the DL, which is stored in a rechargeable battery and then used to power operating circuits and transmit information in the UL.

The DL channel from the H-AP to user $U_i$ and the corresponding reversed UL channel are denoted by complex random variables $\tilde{h}_i$ and $\tilde{g}_i$, respectively, with channel power gains $h_i = {| \tilde{h}_i |^2}$ and $g_i = {| \tilde{g}_i |^2}$. It is assumed that both the DL and UL channels are quasi-static flat-fading, where $h_i$'s and $g_i$'s remain constant during each block transmission time, denoted by $T$, but can vary from one block to another. It is further assumed that the H-AP knows both $h_i$ and $g_i$, $i = 1, \,\, \cdots, \,\, K$, perfectly at the beginning of each block.

\begin{figure}
   \centering
   \includegraphics[width=0.60\columnwidth]{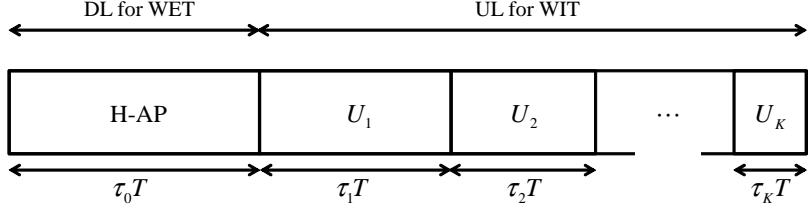}
   \caption{The harvest-then-transmit protocol.}
   \label{Fig_FrameStructure}
\end{figure}

The network adopts a \emph{harvest-then-transmit} protocol as shown in Fig. \ref{Fig_FrameStructure}. In each block, the first $\tau_0 T$ amount of time, $0 < \tau_0 <1$, is assigned to the DL for the H-AP to broadcast wireless energy to all users, while the remaining time in the same block is assigned to the UL for information transmissions, during which users transmit their independent information to the H-AP by TDMA. The amount of time assigned to user $U_i$ in the UL is denoted by $\tau_i T$, $0 \le \tau_i < 1$, $i = 1, \cdots K$. Since $\tau_0$, $\tau_1$, $\cdots$, $\tau_K$ represent the time portions in each block allocated to the H-AP and users $U_1$, $\cdots$, $U_K$ for UL WET and DL WITs, respectively, we have
\begin{equation}\label{Eq_SumTime}
   \sum\limits_{i = 0}^{K} {{\tau _i}} \le 1.
\end{equation}
For convenience, we assume a normalized unit block time $T = 1$ in the sequel without loss of generality; hence, we can use both the terms of energy and power interchangeably.

During the DL phase, the transmitted baseband signal of the H-AP in one block of interest is denoted by ${x_A}$. We assume that $x_A$ is an arbitrary complex random signal\footnote{Note that $x_A$ can also be used to send DL information at the same time; however, this usage will not be considered in this paper. Interested readers may refer to recent works on SWIPT \cite{Varshney2008}-\cite{Fouladgar}.} satisfying $\mathbb E [ {{{\left| {{x_A}} \right|}^2}} ] = P_A$, where $P_A$ denotes the transmit power at the H-AP. The received signal at $U_i$ is then expressed as
\begin{equation}\label{Eq_ReceivedSignal_Sensors}
   {{y_i} = \sqrt{h_i} {x_A} + {z_i}, \,\,\,\,\,\, i = 1, \,\, \cdots, \,\, K,}
\end{equation}
where ${y_i}$ and ${z_i}$ denote the received signal and noise at $U_i$, respectively. It is assumed that $P_A$ is sufficiently large such that the energy harvested due to the receiver noise is negligible. Thus, the amount of energy harvested by each user in the DL can be expressed as (assuming unit block time, i.e., $T = 1$)
\begin{equation}\label{Eq_HarvestedEnergy}
   {{E_i} = {\zeta_i}{P_A}{h_i}{\tau _{0}}, \,\,\,\,\,\, i = 1, \,\, \cdots, \,\, K,}
\end{equation}
where $0 < \zeta_i < 1$, $i = 1\,\, \cdots, \,\, K$, is the energy harvesting efficiency at each receiver. For convenience,  it is assumed that $\zeta_1 = \cdots = \zeta_K = \zeta$ in the sequel of this paper.

After the users replenish their energy during the DL phase, in the subsequent UL phase they transmit independent information to the H-AP in their allocated time slots. It is assumed that at each user terminal, a fixed portion of the harvested energy given by (\ref{Eq_HarvestedEnergy}) is used for its information transmission in the UL, denoted by $\eta_i$ for $U_i$, $0 < \eta_i \le 1$, $i = 1, \,\, \cdots, \,\, K$. Within $\tau_i$ amount of time assigned to $U_i$, we denote ${x_i}$ as the complex baseband signal transmitted by $U_i$, $i = 1, \,\, \cdots, \,\,K$. We assume Gaussian inputs, i.e., ${x_i} \sim {\mathcal{CN}}\left( {0,P_i} \right)$, where ${\mathcal{CN}}( {\mu ,\sigma^2})$ stands for a circularly symmetric complex Gaussian (CSCG) random variable with mean $\mu$ and variance $\sigma^2$, and $P_i$ denotes the average transmit power at $U_i$, which is given by
\begin{equation}\label{Eq_TxPower_Users}
   {P_i = \frac{\eta_i E_i}{\tau_i}, \,\,\,\,\,\, i = 1, \,\, \cdots, \,\, K.}
\end{equation}
For the purpose of exposition, we assume $\eta_i = 1$, $\forall i$, in the sequel, i.e., all the energy harvested at each user is used for its UL information transmission. The received signal at the H-AP in the $i$th UL slot is then expressed as
\begin{equation}\label{Eq_ReceivedSignal_AP}
   {{y_{A,i}} = \sqrt{g_i}{x_i} + {z_{A,i}}, \,\,\,\,\,\, i = 1, \,\, \cdots, \,\, K,}
\end{equation}
where ${y_{A,i}}$ and ${z_{A,i}}$ denote the received signal and noise at the H-AP, respectively, during slot $i$. It is assumed that ${z_{A,i}} \sim {\mathcal{CN}}\left( {0,\sigma^2} \right)$, $\forall i$. From (\ref{Eq_HarvestedEnergy})-(\ref{Eq_ReceivedSignal_AP}), the achievable UL throughput of $U_i$ in bits/second/Hz (bps/Hz) can be expressed as
\[
   {R_i}\left( {\boldsymbol{\tau}} \right) = \tau_i {\log _2}\left( {1 + \frac{{{g_i}{P_i}}}{{{\Gamma}\sigma^2}}} \right) \,\,\,\,\,\,\,\,\,\,\,\,\,\,\,\,\,\,\,\,\,\,\,\,\,\,\,\,\,
\]
\begin{equation}\label{Eq_Rate_Reference}
   {\,\,\,\,\,\,\,\,\,\,\,\,\,\,\,\,\,\,\,\,\,\,\,\,\,\,\,\,\,\,\,\,\,\,\, = \tau_i {\log _2}\left( {1 + \gamma_i\frac{\tau_{0}}{\tau_i}} \right), \,\,\,\,\,\, i = 1, \,\, \cdots, \,\, K,}
\end{equation}
where ${\boldsymbol{\tau}} = [\tau_0 \,\, \tau_1 \,\, \cdots \tau_K]$, and $\Gamma$ represents the signal-to-noise ratio (SNR) gap from the additive white Gaussian noise (AWGN) channel capacity due to a practical modulation and coding scheme (MCS) used. In addition, $\gamma_i$ is given by
\begin{equation}\label{Eq_Effective_SNR}
   {\gamma_i = \frac{\zeta{{h_i}{g_i}{P_A}}}{{{\Gamma} \sigma^2}}, \,\,\,\,\,\, i = 1, \,\, \cdots, \,\, K.}
\end{equation}

\begin{figure}[!t]
   \centering
   \includegraphics[width=0.68\columnwidth]{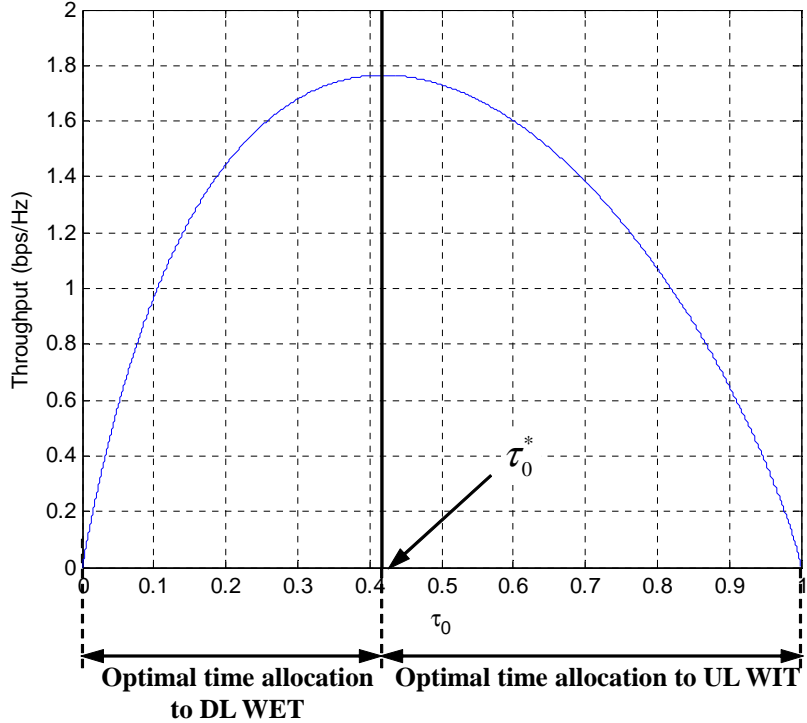}
   \caption{Throughput versus time allocated to DL WET in a single-user WPCN with $\gamma_1 = 10$dB.}
   \label{Fig_Example_Throughput_N_1}
\end{figure}

From (\ref{Eq_Rate_Reference}), it is observed that $R_i \left( \boldsymbol{\tau} \right)$ increases with $\tau_0$ for a given $\tau_i$. In addition, it can also be shown that $R_i \left( \boldsymbol{\tau} \right)$ increases with $\tau_i$ for a given $\tau_0$. However, $\tau_0$ and $\tau_i$'s cannot be increased at the same time given their total time constraint in (\ref{Eq_SumTime}). Fig. \ref{Fig_Example_Throughput_N_1} shows the throughput given in (\ref{Eq_Rate_Reference}) for the special case of one single user in the network, i.e., $K = 1$, versus the time allocated to the DL WET, $\tau_0$, with $\gamma_1 = 10$dB, assuming that (\ref{Eq_SumTime}) holds with equality, i.e., for the UL WIT $\tau_1 = 1 - \tau_0$. It is observed that the throughput is zero when $\tau_0 = 0$, i.e., no time is assigned for WET to the user in the DL and thus no energy is available for WIT in the UL, as well as when $\tau_0 = 1$ or $\tau_1  = 1 - \tau_0 = 0$, i.e., no time is assigned to the user for WIT in the UL. It is also observed that the throughput first increases with $\tau_0$ when $\tau_0 < \tau_0^* = 0.42$, but decreases with increasing $\tau_0$ when $\tau_0 > \tau_0^*$, where $\tau_0^*$ is the optimal time allocation to maximize the throughput. This can be explained as follows. With small $\tau_0$, the amount of energy harvested by $U_1$ in the DL is small. In this regime, as $U_1$ harvests more energy with increasing $\tau_0$, i.e., more energy is available for the information transmission in the UL, the throughput increases with $\tau_0$. However, as $\tau_0$ becomes larger than $\tau_0^*$, the throughput is decreased more significantly due to the reduction in the allocated UL transmission time, $\tau_1$; as a result, the throughput starts to decrease with increasing $\tau_0$. Therefore, there exists a unique optimal $\tau_0^*$ to maximize the throughput.

\section{Sum-Throughput Maximization}\label{SumThroughputMax}
In this section, we characterize the maximum sum-throughput of the WPCN presented in Section \ref{SystemModel} with arbitrary number of users, $K$. From (\ref{Eq_Rate_Reference}), the sum-throughput of all users is given by $R_{\rm{sum}}\left( {{\boldsymbol{\tau}}} \right) = \sum\limits_{i = 1}^K R_i \left({\boldsymbol{\tau}}\right)$, which is a function of the DL and UL time allocation $\boldsymbol{\tau}$. Therefore, from (\ref{Eq_SumTime}) the sum-throughput maximization problem is formulated as
\clearpage
\[
   {({\rm{P1}}): \,\,\,\,\, \mathop {\max }\limits_{{\boldsymbol{\tau}}} \,\,\,\,\,R_{\rm{sum}}\left( {{\boldsymbol{\tau}}} \right) \,\,\,\,\,\,\,\,\,\,\,\,\,\,\,\,\,\,\,\,\,\,\,\,\,\,\,\,\,}
\]
\[
   {{\rm{s.t.}}\,\,\,\,\,\, \sum\limits_{i = 0}^K {{\tau _i}}  \le 1, }
\]
\begin{equation}\label{Eq_SumRateMaxConstraint2}
   { \,\,\,\,\,\,\,\,\,\,\,\,\,\,\,\,\,\,\,\,\,\,\,\,\,\,\,\,\,\,\,\,\,\,\,\,\,\,\,\,\,\,\,\,\,\,\,\,\,\,\,\,\,\, \tau_i  \ge 0, \,\,\,\, i = 0, \,\,1, \,\, \cdots \,\, K. }
\end{equation}

\begin{lemma}\label{Lemma_SumRateConcavity}
   $R_i\left( {\boldsymbol{\tau }} \right)$ is a concave function of $\boldsymbol{\tau}$ for any given $i \in \left\{ {1, \,\, \cdots, \,\, K} \right\}$.
\end{lemma}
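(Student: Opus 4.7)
The plan is to reduce the lemma to the standard fact that the \emph{perspective} of a concave function is jointly concave, so no Hessian computation is actually needed.

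First, I would observe that $R_i(\boldsymbol{\tau})$ depends only on the two coordinates $\tau_0$ and $\tau_i$ of $\boldsymbol{\tau}$ (the other $\tau_j$'s appear only through the inactive constraints). Hence proving concavity of $R_i$ on $\mathbb{R}_+^{K+1}$ reduces to proving joint concavity of the two-variable map
\[
   f_i(u,v) \;=\; v\,\log_2\!\left(1 + \gamma_i\frac{u}{v}\right), \qquad u\ge 0,\ v>0,
\]
in $(u,v)=(\tau_0,\tau_i)$, since adding variables on which the function does not depend preserves concavity.

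Next, I would recognize $f_i$ as the perspective of the single-variable function $\phi(t)=\log_2(1+\gamma_i t)$. Indeed, $f_i(u,v)=v\,\phi(u/v)$, which is exactly the perspective construction. The function $\phi$ is concave on $t\ge 0$ because $1+\gamma_i t$ is affine and $\log_2(\cdot)$ is concave and increasing. By the standard result from convex analysis (e.g., Section 3.2.6 of \cite{Boyd}), the perspective of a concave function is jointly concave on its domain $\{(u,v):v>0\}$. Therefore $f_i$ is jointly concave in $(\tau_0,\tau_i)$ on $\tau_i>0$.

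Finally, I would handle the boundary $\tau_i=0$. Since $\lim_{\tau_i\downarrow 0}\tau_i\log_2(1+\gamma_i\tau_0/\tau_i)=0$ for any $\tau_0\ge 0$, extending $R_i$ by $R_i=0$ at $\tau_i=0$ yields a continuous extension to the closed orthant, and the closure of the hypograph of a concave function is again the hypograph of a concave function, so concavity is preserved. The only real obstacle is making sure one has the right domain bookkeeping for the perspective (the denominator must be strictly positive), which is why the continuity argument at $\tau_i=0$ is worth stating explicitly rather than glossed over; the algebra itself is routine. Alternatively, if the author prefers a self-contained derivation, one could compute the $2\times 2$ Hessian of $f_i$ in $(u,v)$ and verify that its determinant vanishes and its trace is nonpositive, confirming negative semidefiniteness, but the perspective argument is cleaner.
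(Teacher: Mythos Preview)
Your argument is correct. Recognizing $R_i(\boldsymbol{\tau})=\tau_i\,\phi(\tau_0/\tau_i)$ with $\phi(t)=\log_2(1+\gamma_i t)$ as the perspective of a concave function is exactly the right structural observation, and the reduction to two variables plus the continuous extension at $\tau_i=0$ are handled cleanly.

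The paper takes the more pedestrian route you mention at the end as an alternative: it computes the full $(K{+}1)\times(K{+}1)$ Hessian of $R_i$, identifies the nonzero entries (only the $(0,0)$, $(i,i)$, and $(0,i)$ positions are nonzero), and then verifies directly that
\[
   \mathbf{v}^T\nabla^2 R_i(\boldsymbol{\tau})\,\mathbf{v}
   \;=\; -\frac{1}{\ln 2}\,\frac{\gamma_i^2}{\tau_i^3\bigl(1+\gamma_i\tau_0/\tau_i\bigr)^2}\,
   (v_i\tau_0 - v_0\tau_i)^2 \;\le\; 0.
\]
This confirms negative semidefiniteness and, incidentally, the rank-one structure you predicted (zero determinant, nonpositive trace). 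The paper's approach is fully self-contained and requires no appeal to an outside convexity result, at the cost of some bookkeeping; your perspective argument is shorter, conceptually cleaner, and immediately explains \emph{why} the Hessian is rank one rather than merely verifying it. Either proof is perfectly adequate here.
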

\begin{proof}
   Please refer to Appendix \ref{App_Proof_Lemma_SumRateConcavity}.
\end{proof}

From Lemma \ref{Lemma_SumRateConcavity}, it follows that $R_{\rm{sum}}\left( {{\boldsymbol{\tau}}} \right)$ is also a concave function of $\boldsymbol{\tau}$ since it is the summation of $R_i \left( \boldsymbol{\tau} \right)$'s. Therefore, $(\rm{P1})$ is a convex optimization problem, and thus can be solved by convex optimization techniques. To solve ($\rm{P1}$), we first have the following lemma.

\begin{lemma}\label{Lemma_ExistenceOfSolution}
   Given $A > 0$, there exists a unique $z^* > 1$ that is the solution of $f\left( z \right) =  A$, where
   \begin{equation}\label{Eq_Lemma_Existence}
      {f\left( z \right) \buildrel \Delta \over =  z\ln z - z + 1, \,\,\, z \ge 0.}
   \end{equation}
\end{lemma}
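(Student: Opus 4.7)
The plan is to analyze $f(z) = z\ln z - z + 1$ directly via elementary calculus, showing that on $(1,\infty)$ the function is continuous, strictly increasing, starts at $0$, and diverges to $+\infty$, so the intermediate value theorem together with monotonicity delivers existence and uniqueness of $z^* > 1$ with $f(z^*) = A$ for every $A > 0$.

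Concretely, I would first evaluate $f(1) = 1 \cdot \ln 1 - 1 + 1 = 0$, establishing the left endpoint. Next I would differentiate: $f'(z) = \ln z + z \cdot (1/z) - 1 = \ln z$, which is strictly positive for $z > 1$ (and strictly negative for $0 < z < 1$, though that region is not needed here). This immediately gives that $f$ is strictly increasing on $(1,\infty)$.

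Then I would handle the behavior at infinity: since $z \ln z$ grows faster than $z$, we have $f(z) = z(\ln z - 1) + 1 \to +\infty$ as $z \to \infty$. Combined with $f(1) = 0$, continuity of $f$ on $[1,\infty)$, and strict monotonicity on $(1,\infty)$, the intermediate value theorem guarantees that for any $A > 0$ there exists $z^* \in (1,\infty)$ with $f(z^*) = A$, and strict monotonicity forces this $z^*$ to be unique.

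There is no substantive obstacle here; the statement is essentially a one-line consequence of $f'(z) = \ln z$ together with the boundary value $f(1) = 0$. The only thing to be slightly careful about is ruling out solutions in $[0,1]$: since $f'(z) = \ln z < 0$ on $(0,1)$ and $f(1) = 0$, we have $f(z) > 0$ on $(0,1)$, but any such root would violate the stipulation $z^* > 1$; it is therefore worth noting explicitly that we are restricting attention to $z > 1$ where $f$ is one-to-one, which is the regime the lemma asserts.
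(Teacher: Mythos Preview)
Your proposal is correct and essentially mirrors the paper's own proof: both compute $f'(z)=\ln z$, note $f(1)=0$, and use monotonicity of $f$ on $(1,\infty)$ to conclude existence and uniqueness of $z^*>1$. The only cosmetic difference is that the paper also records $f''(z)=1/z$ to frame the argument in terms of convexity, whereas you invoke the intermediate value theorem and the limit $f(z)\to\infty$ directly; neither addition changes the substance.
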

\begin{proof}
   Please refer to Appendix \ref{App_Proof_Lemma_ExistenceOfSolution}.
\end{proof}

\begin{figure}
   \centering
   \includegraphics[width=0.63\columnwidth]{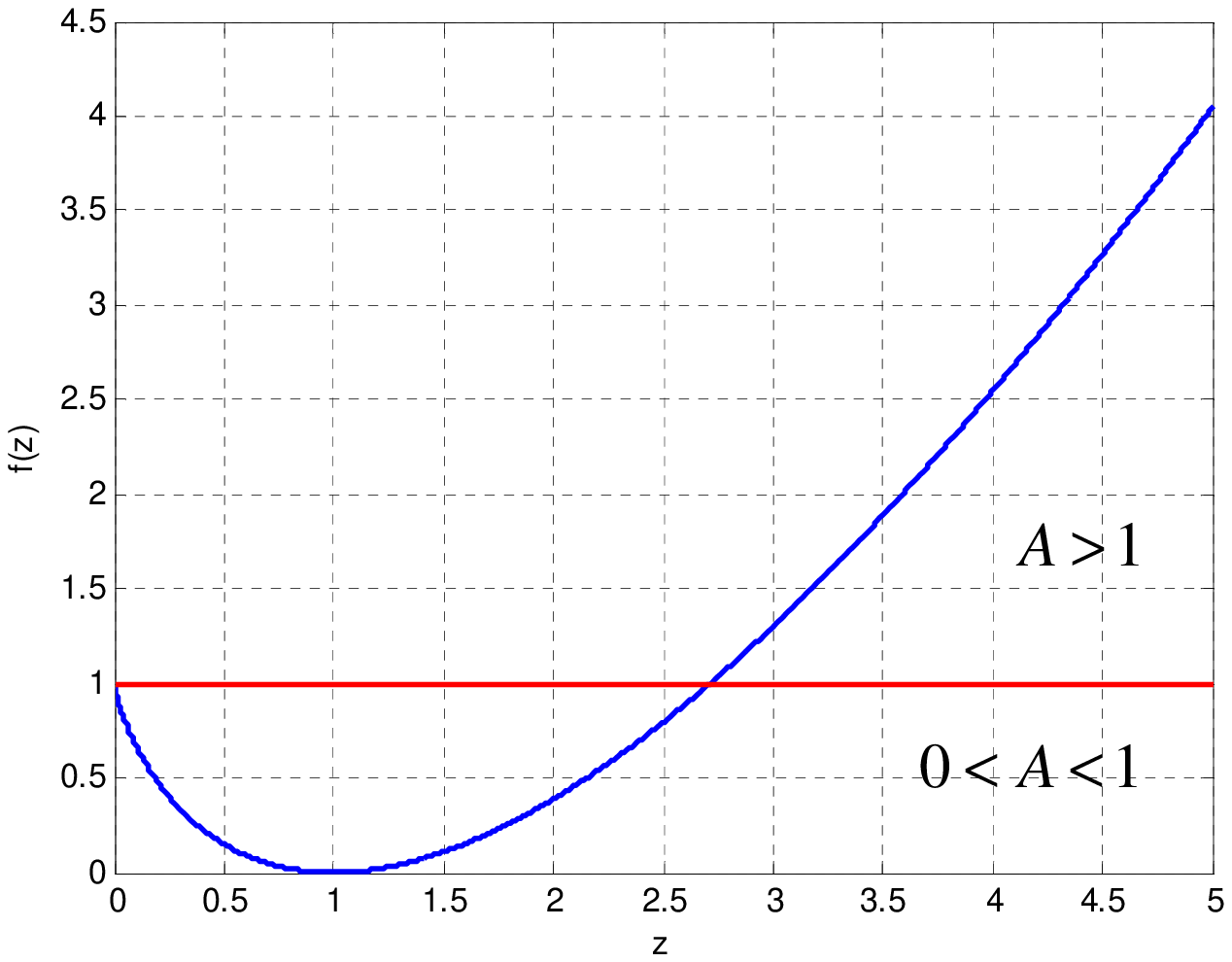}
   \caption{Plot of $f\left( z \right)$ given in (\ref{Eq_Lemma_Existence}) versus $z \ge 0$. }
   \label{Fig_Function_of_z}
\end{figure}

Fig. \ref{Fig_Function_of_z} shows $f \left( z \right)$ given in (\ref{Eq_Lemma_Existence}) with $z \ge 0$. It is observed that $f \left( z \right)$ is a convex function over $z \ge 0$ where the minimum is attained at $z = 1$ with $f \left( 1 \right) = 0$. Therefore, given $0 < A \le 1$, there are two different solutions for $f \left( z \right) = A$, among which one is smaller than $1$ and the other is larger than $1$, i.e., $z^* > 1$. On the other hand, if $A > 1$, there is only one solution for $f \left( z \right) = A$, which is larger than $1$, i.e., $z^* > 1$. The above observations are thus in accordance with Lemma \ref{Lemma_ExistenceOfSolution}.

\begin{proposition}\label{Proposition_SumRateOptimalTime}
   The optimal time allocation solution for $(\rm{P1})$, denoted by ${{\boldsymbol{\tau }}^*} = {\left[ {\tau _0^*\,\,\,\tau _1^*\,\,\, \cdots \,\,\,\tau _{K}^*} \right]}$, is given by
   \begin{equation}\label{Eq_Proposition_OptTimeAlloc}
      {\tau _i^* = \left\{ {\begin{array}{*{20}{c}}
      {\frac{{{z^* - 1}}}{{A + z^* - 1}}}  \\
      {\frac{{\gamma_i}}{{A + z^* - 1}}}  \\
      \end{array}\begin{array}{*{20}{c}}
      {,\,\,\,\,\,\,\,\,\,\,\,\,\, i = 0 \,\,\,\,\,\,\,\,\,\,}  \\
      {,\,\,\, i = 1, \cdots, K}  \\
      \end{array}} \right.}
   \end{equation}
   where ${A \buildrel \Delta \over = \sum\limits_{i = 1}^K {{\gamma_i}}} > 0$ and $z^* > 1$ is the corresponding solution of $f \left( z \right) = A$ as given by Lemma \ref{Lemma_ExistenceOfSolution}.
\end{proposition}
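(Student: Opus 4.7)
The plan is to apply the Karush--Kuhn--Tucker (KKT) conditions, which are both necessary and sufficient since by Lemma~\ref{Lemma_SumRateConcavity} the objective $R_{\rm sum}$ is concave in $\boldsymbol{\tau}$ and the constraints in (\ref{Eq_SumRateMaxConstraint2}) are linear. I would first argue that at any optimum the sum constraint is tight (increasing any $\tau_i$ strictly increases $R_i$ without decreasing the others), that $\tau_0^*>0$ (otherwise every $R_i$ is zero), and that $\tau_i^*>0$ for every $i\geq 1$ (since $\partial R_i/\partial \tau_i\to +\infty$ as $\tau_i\downarrow 0$ with $\tau_0>0$, so a small transfer of time from $\tau_0$ to user $i$ would always be profitable). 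Consequently the non-negativity multipliers vanish and the KKT system reduces to a single scalar Lagrange multiplier $\nu>0$ attached to the active total-time constraint.

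Writing $x_i=\gamma_i\tau_0/\tau_i$, a direct differentiation of (\ref{Eq_Rate_Reference}) gives the stationarity conditions
\[
\log_2(1+x_i)-\frac{1}{\ln 2}\cdot\frac{x_i}{1+x_i}=\nu,\quad i=1,\ldots,K,
\]
together with
\[
\frac{1}{\ln 2}\sum_{i=1}^{K}\frac{\gamma_i}{1+x_i}=\nu.
\]
The key structural observation is that the left-hand side of the first equation depends on $i$ only through $1+x_i$, and the map $y\mapsto \ln y-1+1/y$ is strictly increasing on $(1,\infty)$ because its derivative $1/y-1/y^{2}$ is positive there. Hence all $1+x_i$ must coincide at the optimum; denote this common value by $z^*>1$. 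Equivalently, $\gamma_i\tau_0^*/\tau_i^*=z^*-1$ for every user, so $\tau_i^*=\gamma_i\tau_0^*/(z^*-1)$. This monotonicity-based collapse from $K$ coupled equations to a single scalar unknown is the one step that is not purely mechanical; everything afterwards is algebra.

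Substituting $1+x_i=z^*$ into the $\tau_0$-stationarity condition yields $\nu\ln 2=A/z^*$, while plugging the same substitution into any of the $\tau_i$-equations yields $\nu\ln 2=\ln z^*-1+1/z^*$. Equating these two expressions and multiplying by $z^*$ produces
\[
z^*\ln z^*-z^*+1=A,
\]
which is exactly $f(z^*)=A$ in the notation of (\ref{Eq_Lemma_Existence}); Lemma~\ref{Lemma_ExistenceOfSolution} then supplies the existence and uniqueness of $z^*>1$. Finally, enforcing the active total-time constraint $\tau_0^*+\sum_{i=1}^{K}\tau_i^*=1$ with $\tau_i^*=\gamma_i\tau_0^*/(z^*-1)$ gives $\tau_0^*\bigl(1+A/(z^*-1)\bigr)=1$, i.e., $\tau_0^*=(z^*-1)/(A+z^*-1)$ and $\tau_i^*=\gamma_i/(A+z^*-1)$, matching (\ref{Eq_Proposition_OptTimeAlloc}).
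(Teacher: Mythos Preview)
Your proposal is correct and follows essentially the same route as the paper: set up the KKT conditions for the convex problem $(\mathrm{P1})$, use the strict monotonicity of $t(x)=\ln(1+x)-x/(1+x)$ to force all ratios $\gamma_i\tau_0^*/\tau_i^*$ to coincide, and then reduce the system to the single scalar equation $f(z^*)=A$ together with the tight total-time constraint. Your treatment is in fact slightly more careful than the paper's in explicitly justifying why $\tau_0^*>0$ and every $\tau_i^*>0$ (so that the non-negativity multipliers vanish), but the substance of the argument is the same.
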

\begin{proof}
   Please refer to Appendix \ref{App_Proof_Proposition_SumRateOptimalTime}.
\end{proof}

It is worth noting that $A > 0$ always holds since from (\ref{Eq_Effective_SNR}) we have $\gamma_i > 0$, $i=1,\,\, 2,\,\, \cdots, K$, provided that $h_i \ne 0$ and $g_i \ne 0$. Hence, given a set of strictly positive $\gamma_i$'s, according to Lemma \ref{Lemma_ExistenceOfSolution} $z^* > 1$ is uniquely determined with the presumed $A$, thus resulting in a unique solution ${\boldsymbol{\tau}}^*$ for ($\rm{P1}$), with $\tau_i^* > 0$, $i = 0, \,\, 1, \,\, \cdots, \,\, K$, i.e., the time allocated to the DL WET is always greater than zero, and so is the time allocated to each user in the UL WIT, provided that $\gamma_i > 0$, $\forall i$. Furthermore, from Proposition \ref{Proposition_SumRateOptimalTime} we have the following corollary.

\begin{corollary}\label{Corollary_SumThroughputMax_tau0}
   In the optimal time allocation solution of ($\rm{P1}$), $\tau_0^*$ is a monotonically decreasing function of $A > 0$.
\end{corollary}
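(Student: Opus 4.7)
The plan is to exploit the identity $f(z^*)=A$ from Lemma \ref{Lemma_ExistenceOfSolution} to eliminate $A$ from the formula for $\tau_0^*$, reducing it to a one-variable function of $z^*$, and then track how $z^*$ itself depends on $A$.

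First I would rewrite the numerator and denominator of $\tau_0^*$ using the definition of $f$. By Proposition \ref{Proposition_SumRateOptimalTime},
\begin{equation*}
\tau_0^* \;=\; \frac{z^*-1}{A + z^* - 1},
\end{equation*}
and since $A = f(z^*) = z^*\ln z^* - z^* + 1$, the denominator simplifies nicely to $A+z^*-1 = z^*\ln z^*$. Hence
\begin{equation*}
\tau_0^* \;=\; g(z^*), \qquad g(z) \;\triangleq\; \frac{z-1}{z\ln z}, \quad z>1.
\end{equation*}
So the whole question reduces to two monotonicity claims: (i) $z^*$ is increasing in $A$, and (ii) $g$ is decreasing on $(1,\infty)$.

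For (i), I would note that $f'(z) = \ln z$, which is strictly positive for $z>1$, so $f$ is strictly increasing on $[1,\infty)$ with $f(1)=0$ and $f(z)\to\infty$. Therefore its inverse $z^*(A)$ on $A>0$ is a strictly increasing, well-defined function of $A$ (this is essentially a restatement of the uniqueness in Lemma \ref{Lemma_ExistenceOfSolution}, together with the observation that larger $A$ forces larger $z^*$).

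For (ii), I would compute
\begin{equation*}
g'(z) \;=\; \frac{z\ln z - (z-1)(\ln z + 1)}{(z\ln z)^2} \;=\; \frac{\ln z - (z-1)}{(z\ln z)^2},
\end{equation*}
and invoke the elementary inequality $\ln z < z-1$ for all $z>1$. This makes the numerator strictly negative while the denominator is strictly positive, so $g'(z)<0$ on $(1,\infty)$. Composing the two monotonicities, $A \mapsto z^*(A) \mapsto g(z^*(A)) = \tau_0^*$ is strictly decreasing, which is the claim. The only step with any content is the algebraic simplification $A+z^*-1=z^*\ln z^*$; after that the argument is purely one-variable calculus, so I do not anticipate a real obstacle.
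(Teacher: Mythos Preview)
Your argument is correct and follows essentially the same route as the paper: both rewrite $\tau_0^* = (z^*-1)/(z^*\ln z^*)$ via the identity $A = z^*\ln z^* - z^* + 1$, observe that $z^*$ is strictly increasing in $A$, and then show the one-variable function $g(z)=(z-1)/(z\ln z)$ is decreasing on $(1,\infty)$. Your quotient-rule computation together with $\ln z < z-1$ is in fact a cleaner justification of this last step than the paper's informal ``denominator grows faster than numerator'' comparison.
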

\begin{proof}
   Please refer to Appendix \ref{App_Proof_Corollary_SumThroughputMax_tau0}.
\end{proof}

From Corollary \ref{Corollary_SumThroughputMax_tau0}, it is inferred that the time allocated to the DL WET decreases with increasing $\gamma_i$'s, or channel power gains $h_i$'s and/or $g_i$'s, since ${A = \sum\limits_{i = 1}^K {{\gamma_i}}}$ and $\gamma_i \propto h_i g_i$, $i = 1, \,\, \cdots, K$, as shown in (\ref{Eq_Effective_SNR}). As a result, $\tau_i ^*$'s, $i = 1, \,\, \cdots, \,\, K$, increase with $A$, i.e., the time allocated to the UL WIT increases with $\gamma_i$'s. This is an interesting observation implying that when the channel power gains, $h_i$'s and $g_i$'s, become larger, we should allocate more time to the UL WITs instead of the DL WET to maximize the sum-throughput. This is because with larger $\gamma_i$'s, the required energy for UL WITs becomes smaller given any transmission rate; thus, each user can harvest sufficient amount of wireless energy from the H-AP even with a smaller time allocated to the DL WET.

Note that the sum-throughput maximization solution given in (\ref{Eq_Proposition_OptTimeAlloc}) allocates more time to a `near' user to the H-AP than a `far' user, since in practice $h_i \propto D_{i}^{-\alpha_d}$, $g_i \propto D_{i}^{-\alpha_u}$, and $\gamma_i \propto h_i g_i$ according to (\ref{Eq_Effective_SNR}), where $\alpha_d \ge 2$ and $\alpha_u \ge 2$ denote the channel pathloss exponents in the DL and UL, respectively, and $D_i$ denotes the distance between the H-AP and $U_i$. Thus, from (\ref{Eq_Proposition_OptTimeAlloc}) it follows that $\tau_i^* \propto D_i^{-(\alpha_d + \alpha_u)}$, $i = 1, \,\, \cdots, K$, which results in an unfair time and throughput allocation among users in the WPCN, a phenomenon termed ``doubly near-far problem''. To further illustrate this issue, Fig. \ref{Fig_Example_SumThroughput_N_2} shows the sum-throughput $R_{\rm{sum}}\left( {{\boldsymbol{\tau}}} \right)$ versus UL WIT time allocation $\tau_1$ and $\tau_2$ for a two-user network with $K = 2$ and $D_1 = \frac{1}{2}D_2$. It is assumed that the channel reciprocity holds for the DL and UL and thus $h_i = g_i$, $i = 1, 2$, with $\alpha_{d} = \alpha_u = 2$. Accordingly, we set $\gamma_1 = 22$dB and $\gamma_2 = 10$dB, with $\gamma_1/\gamma_2 = \left( D_2/D_1 \right)^{\alpha_d + \alpha_u}$. It is observed from Fig. \ref{Fig_Example_SumThroughput_N_2} that $R_{\rm{sum}}\left( {{\boldsymbol{\tau}}} \right) = 0$ when $\tau_1 = \tau_2 = 0$, $\tau_0 = 1 - \left( \tau_1 + \tau_2 \right) = 1$, or $\tau_1 + \tau_2 = 1$, $\tau_0 = 0$, since no time is allocated to the users for the UL WITs in the former case, while no time is allocated to the DL WET in the latter case. The numerical result of sum-throughput clearly shows that $R_{\rm{sum}}\left( {{\boldsymbol{\tau}}} \right)$ is strictly positive when $0 < \tau_1 + \tau_2 <1$. In addition, it is observed that the optimal DL and UL time allocation to maximize the sum-throughput is ${\boldsymbol{\tau}}^* = [0.2441, \,\, 0.7114, \,\, 0.0445]$ where $\tau_1^* = 16 \tau_2^*$, i.e., $\tau_1^* = (D_2/D_1)^{\alpha_{d} + \alpha_{u}} \tau_2^*$, which is consistent with (\ref{Eq_Proposition_OptTimeAlloc}). Furthermore, at the optimal ${\boldsymbol{\tau}}^*$, $R_{\rm{sum}}\left( {{\boldsymbol{\tau}}}^* \right) = 4.58$ bps/Hz, with $R_1\left( {{\boldsymbol{\tau}}^*} \right) = 4.13$ bps/Hz and $R_2\left( {{\boldsymbol{\tau}}^*} \right) = 0.45$ bps/Hz, which demonstrates the very unfair throughput allocation between the two users due to the doubly near-far problem.

\begin{figure}
   \centering
   \includegraphics[width=0.66\columnwidth]{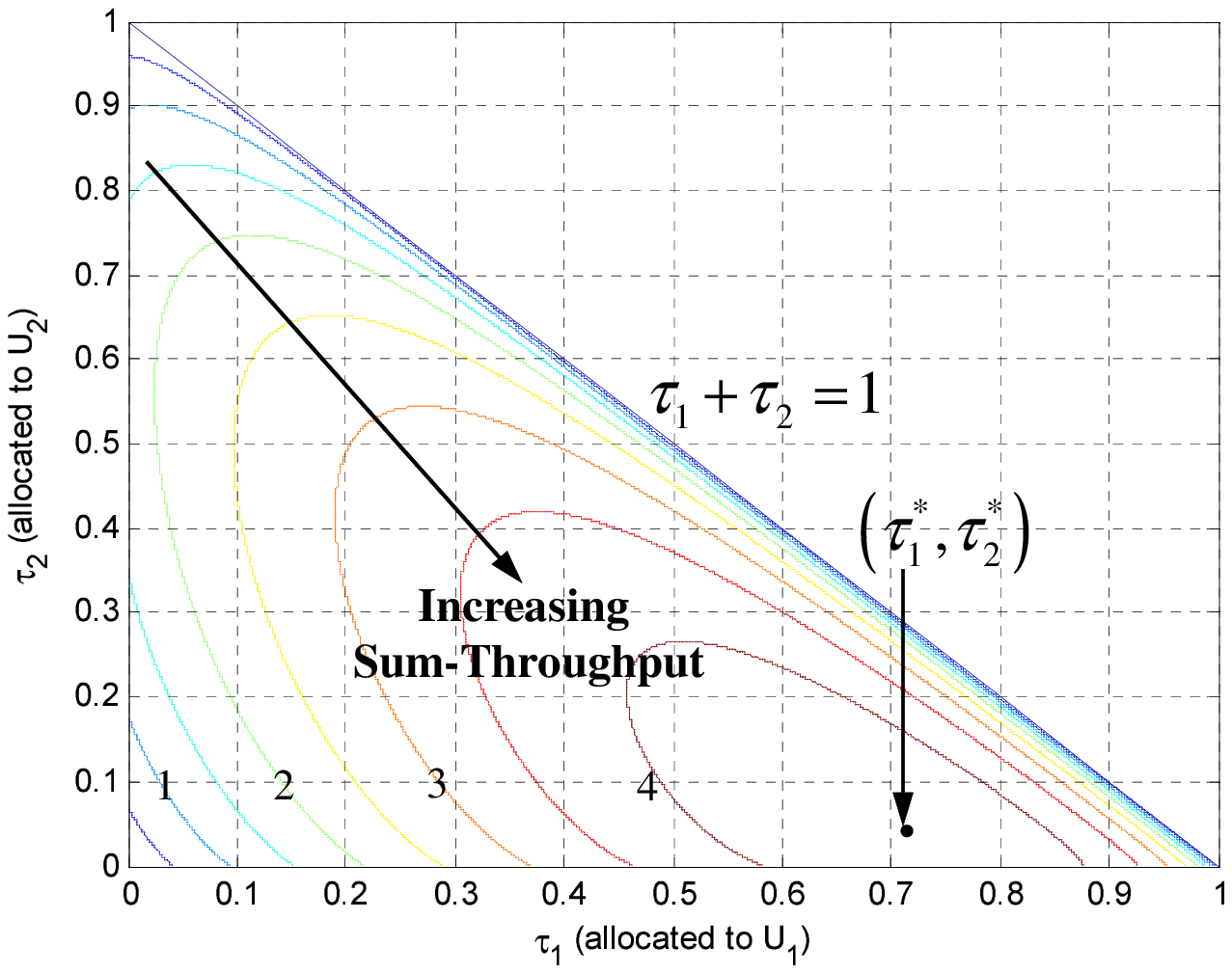}
   \caption{Sum-throughput (in bps/Hz) versus time allocation. }
   \label{Fig_Example_SumThroughput_N_2}
\end{figure}

For comparison, we consider UL transmissions in a conventional TDMA-based wireless network with WIT only \cite{Knopp}-\cite{Zhang_MAC}, where each user is equipped with a constant energy supply, and thus has an equal energy consumption at each block denoted by ${\bar{E}}$. It then follows from (\ref{Eq_Effective_SNR}) that $\gamma_i \propto D_i^{-\alpha_u}$ and thus from (\ref{Eq_Proposition_OptTimeAlloc}), the optimal time allocation to maximize the sum-throughput of such a conventional TDMA network should satisfy $\tau_i^* \propto D_i^{-\alpha_u}$, $i = 1, \,\, \cdots, \,\, K$, and $\tau_0^* = 0$ (since no DL WET is needed). Clearly, the WPCN suffers from a more severe near-far problem than the conventional TDMA network. With the same setup as for Fig. \ref{Fig_Example_SumThroughput_N_2}, in Fig. \ref{Fig_WPCN_vs_TDMA_N_2} we show the optimal throughput of $U_2$, $R_2 \left( {\boldsymbol{\tau}}^* \right)$, normalized by that of $U_1$, $R_1 \left( {\boldsymbol{\tau}}^* \right)$, in a WPCN versus that in a conventional TDMA network for different values of the pathloss exponent $\alpha$, with $\alpha_d = \alpha_u = \alpha$. For the WPCN, $\gamma_1$ is set to be fixed as $\gamma_1 = 22$dB for $U_1$, while $\gamma_2 = 10$, $7$, $4$, $1$, and $-2$dB for $\alpha = 2$, $2.5$, $3$, $3.5$, and $4$, respectively, since $\gamma_1/\gamma_2 = \left( D_2/D_1 \right)^{2\alpha}$. For the conventional TDMA network, $\bar E$ for both $U_1$ and $U_2$ are assumed to be $\bar E = \frac{1}{2}\left( E_1 + E_2 \right)$ with $E_1$ and $E_2$ denoting the average harvested energy at $U_1$ and $U_2$ in the WPCN under comparison, respectively; it then follows that for the conventional TDMA network, $\gamma_1 = 13$dB and $\gamma_2 = 7.1$, $5.5$, $4.0$, $2.4$, and $0.9$dB for $\alpha = 2$, $2.5$, $3$, $3.5$, and $4$, respectively. From Fig. \ref{Fig_WPCN_vs_TDMA_N_2}, it is observed that the throughput ratio of the two users in the WPNC case decreases twice faster than that in the conventional TDMA in the logarithm scale due to the more severe (doubly) near-far problem.

\begin{figure}
   \centering
   \includegraphics[width=0.66\columnwidth]{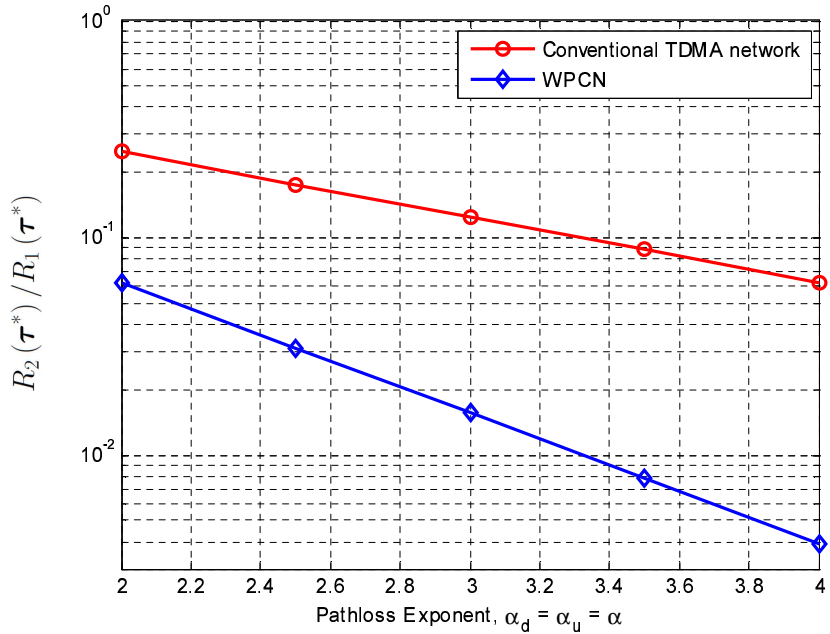}
   \caption{Throughput ratio in a two-user WPCN versus conventional TDMA network with equal energy supply. }
   \label{Fig_WPCN_vs_TDMA_N_2}
\end{figure}

\section{Common-Throughput Maximization}\label{CommonThroughputMax}
In this section, we tackle the doubly near-far problem in the WPNC by applying the common-throughput maximization approach, which guarantees equal throughput allocations to all users and yet maximize their sum-throughput. From (\ref{Eq_SumTime}) and (\ref{Eq_Rate_Reference}), the common-throughput maximization problem is formulated as
\[
   {({\rm{P}}{\rm{2}}):\,\,\, \mathop {\max }\limits_{\bar{R},\boldsymbol{\tau} } \,\,\,\, \bar{R}\,\,\,\,\,\,\,\,\,\,\,\,\,\,\,\,\,\,\,\,\,\,\,\,\,\,\,\,\,\,\,\,\,\,\,\,\,\,\,\,\,\,\,\,\,\,\,\,\,\,\,\,\,\,\,\,\,\,\,\,\,\,\,\,\,\,\,\,}
\]
\begin{equation}\label{Eq_CommonRateMaxConstraint1}
   {\,\,\,\,\,\,\,\,\,\,\,\,\,\, {\rm{s.t.}}\,\,\,\,\, R_i\left( {\boldsymbol{\tau }} \right) \ge \bar{R}, \,\,\,i = 1 \cdots K,}
\end{equation}
\[
   {\boldsymbol{\tau}  \in \mathcal{D}, \,\,\,\,\,\,\,\,\,\,\,\,\,\,\,\,}
\]
where $\bar R$ denotes the common-throughput and $\mathcal{D}$ is the feasible set of $\boldsymbol{\tau}$ specified by (\ref{Eq_SumTime}) and (\ref{Eq_SumRateMaxConstraint2}).

\begin{remark}\label{Remark_EqualRate}
   Problem $({\rm{P}}{\rm{2}})$ is designed to guarantee the throughput of the user with the worst channel condition, e.g., of the largest distance from the H-AP. Since $R_i\left( {\boldsymbol{\tau }} \right)$ given by (\ref{Eq_Rate_Reference}) is a monotonically increasing function of both $\tau_0$ and $\tau_i$, it can be easily shown that the optimal time allocation solution ${\boldsymbol{\tau}}^*$ for ($\rm{P2}$) should allocate the same optimal throughput to all the users, denoted by ${\bar{R}^{*}} = R_1 \left( {\boldsymbol{\tau}}^{*} \right) = \cdots = R_K \left( {\boldsymbol{\tau}}^{*} \right)$, with $\sum\limits_{i = 0}^K {\tau _i^*}  = 1$, when the minimum user throughput in the network is maximized. In addition, allocating equal throughput to all users can be relevant in practice, since one typical application of the WPCN is sensor network, where all the sensors may need to periodically send their sensing data to a function center (modelled as the H-AP in our setup) with the same rate.
\end{remark}

The maximum common-throughput ${\bar{R}}^*$ is the maximum of all the feasible common-throughput $\bar R$ that satisfies the rate inequalities in (\ref{Eq_CommonRateMaxConstraint1}) of ($\rm{P2}$). To solve ($\rm{P2}$), given any $\bar R > 0$, we first consider the following feasibility problem:
\[
   {{\rm{Find}} \,\,\,\,\, {{\boldsymbol{\tau}}} \,\,\,\,\,\,\,\,\,\,\,\,\,\,\,\,\,\,\,\,\,\,\,\,\, }
\]
\[
   {\,\,\,\,\,\,\,\,\,\,\,\,\,\,\,\,\,\,\,\,\,\,\,\,\,\,\,\,\,\,\,\,\,\,\,\,\,\,\,\,\,\,\,\, {\rm{s.t.}}\,\,\,\,\,\,{R_i}\left( \boldsymbol{\tau}  \right)  \ge \bar{R},\,\,\, i = 1, \,\, \cdots, K,}
\]
\begin{equation}\label{Eq_FeasibilityProblem}
   { \,\,\,\,\,\,\, \boldsymbol{\tau}  \in \mathcal{D}. }
\end{equation}
Since the problem in (\ref{Eq_FeasibilityProblem}) is convex, we consider its Lagrangian given by
\begin{equation}\label{Eq_Lagrangian}
   {{\mathcal{L}}\left( {\boldsymbol{\tau} ,\boldsymbol{\lambda} } \right) =  - \sum\limits_{i = 1}^K {{\lambda _i}\left( { R_i \left( \boldsymbol{\tau} \right) - \bar R} \right)} ,}
\end{equation}
where $\boldsymbol{\lambda} = [\lambda_1, \,\, \cdots, \lambda_K] \ge 0$ (`$\ge$' denotes the component-wise inequality) consists of the Lagrange multipliers associated with the $K$ user throughput constraints in problem (\ref{Eq_FeasibilityProblem}). The dual function of problem (\ref{Eq_FeasibilityProblem}) is then given by
\begin{equation}\label{Eq_DualFunction}
   {{\mathcal{G}}\left( \boldsymbol{\lambda}  \right) = \mathop {\min }\limits_{\boldsymbol{\tau}  \in \mathcal{D}} {\mathcal{L}}\left( {\boldsymbol{\tau} ,\boldsymbol{\lambda} } \right).}
\end{equation}
The dual function ${\mathcal{G}}\left( \boldsymbol{\lambda}  \right)$ can be used to determine whether problem (\ref{Eq_FeasibilityProblem}) is feasible, as provided in the following lemma.

\begin{lemma}\label{Lemma_DualFuntion_Feasibility}
   For a given $\bar R > 0$, problem (\ref{Eq_FeasibilityProblem}) is infeasible if and only if there exists an ${\boldsymbol{\lambda}} \ge 0$ such that $\mathcal{G}(\boldsymbol{\lambda}) > 0$.
\end{lemma}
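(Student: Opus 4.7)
The plan is to prove the two directions separately. The ``if'' direction is immediate from weak duality, while the ``only if'' direction requires a minimax argument that leverages the convexity structure established in Lemma~\ref{Lemma_SumRateConcavity}.

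First, suppose some ${\boldsymbol{\lambda}} \ge 0$ satisfies $\mathcal{G}(\boldsymbol{\lambda}) > 0$. If (\ref{Eq_FeasibilityProblem}) admitted a feasible ${\boldsymbol{\tau}}^\dagger \in \mathcal{D}$, then $R_i(\boldsymbol{\tau}^\dagger) \ge \bar R$ for every $i$, so $\mathcal{L}(\boldsymbol{\tau}^\dagger, \boldsymbol{\lambda}) = -\sum_{i=1}^K \lambda_i(R_i(\boldsymbol{\tau}^\dagger) - \bar R) \le 0$. Since $\mathcal{G}(\boldsymbol{\lambda}) \le \mathcal{L}(\boldsymbol{\tau}^\dagger, \boldsymbol{\lambda})$ by the definition in (\ref{Eq_DualFunction}), one would obtain $\mathcal{G}(\boldsymbol{\lambda}) \le 0$, contradicting the hypothesis. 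Therefore (\ref{Eq_FeasibilityProblem}) must be infeasible.

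For the converse, I would reformulate the feasibility question as the scalarized minimax value $p^* := \min_{\boldsymbol{\tau} \in \mathcal{D}} \max_{i} \bigl(\bar R - R_i(\boldsymbol{\tau})\bigr)$. This minimum is attained because $\mathcal{D}$ is compact and each $R_i$ is continuous on $\mathcal{D}$, and infeasibility of (\ref{Eq_FeasibilityProblem}) is equivalent to $p^* > 0$. Rewriting the inner $\max_i$ as a linear program over the probability simplex $\Delta_K := \{\boldsymbol{\lambda} \ge 0 : \sum_{i=1}^K \lambda_i = 1\}$ yields $p^* = \min_{\boldsymbol{\tau} \in \mathcal{D}} \max_{\boldsymbol{\lambda} \in \Delta_K} \mathcal{L}(\boldsymbol{\tau}, \boldsymbol{\lambda})$. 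The integrand $\mathcal{L}$ is convex in $\boldsymbol{\tau}$ (by Lemma~\ref{Lemma_SumRateConcavity} together with $\lambda_i \ge 0$) and linear, hence concave, in $\boldsymbol{\lambda}$, and both $\mathcal{D}$ and $\Delta_K$ are compact and convex; Sion's minimax theorem then justifies exchanging the two operators to give $p^* = \max_{\boldsymbol{\lambda} \in \Delta_K} \mathcal{G}(\boldsymbol{\lambda})$. Thus $p^* > 0$ delivers some $\boldsymbol{\lambda} \in \Delta_K$, which in particular satisfies $\boldsymbol{\lambda} \ge 0$, with $\mathcal{G}(\boldsymbol{\lambda}) > 0$, as required.

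The main obstacle is justifying the min--max exchange; Sion's theorem requires convexity in $\boldsymbol{\tau}$, concavity in $\boldsymbol{\lambda}$, and compactness, each of which is verified in a single line here. A route that avoids invoking Sion is to embed (\ref{Eq_FeasibilityProblem}) into the convex program $\min\{t : R_i(\boldsymbol{\tau}) \ge \bar R - t, \ \forall i, \ \boldsymbol{\tau} \in \mathcal{D}\}$, which trivially satisfies Slater's condition (take $t$ large), and then apply standard strong duality; its Lagrangian dual collapses to $\max\{\mathcal{G}(\boldsymbol{\lambda}) : \boldsymbol{\lambda} \ge 0, \sum_{i=1}^K \lambda_i = 1\}$, producing the same dichotomy. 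Either route finishes the proof cleanly, and in both cases positive homogeneity of $\mathcal{G}$ in $\boldsymbol{\lambda}$ confirms that enlarging the search from $\Delta_K$ to $\{\boldsymbol{\lambda} \ge 0\}$ does not change whether a positive value is achievable.
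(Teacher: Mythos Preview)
Your argument is correct, and on the ``only if'' direction it is actually more complete than the paper's. For the ``if'' direction you and the paper reason identically via weak duality. For the converse, the paper announces it will prove the contrapositive (``$\mathcal{G}(\boldsymbol{\lambda})\le 0$ for all $\boldsymbol{\lambda}\ge 0$ implies feasibility'') by contradiction, but the hypothesis it then adopts is ``feasible \emph{and} $\exists\,\boldsymbol{\lambda}''$ with $\mathcal{G}(\boldsymbol{\lambda}'')>0$,'' from which it derives $\mathcal{G}(\boldsymbol{\lambda}'')\le 0$; that is merely a second pass at the ``if'' direction and never actually produces a certificate $\boldsymbol{\lambda}$ from infeasibility. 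Your approach supplies exactly the separating mechanism the paper leaves out: recasting $\max_i(\bar R - R_i(\boldsymbol{\tau}))$ as $\max_{\boldsymbol{\lambda}\in\Delta_K}\mathcal{L}(\boldsymbol{\tau},\boldsymbol{\lambda})$ and then interchanging $\min$ and $\max$ via Sion (convexity in $\boldsymbol{\tau}$ from Lemma~\ref{Lemma_SumRateConcavity}, affinity in $\boldsymbol{\lambda}$, compactness of $\mathcal{D}$ and $\Delta_K$) legitimately yields $p^*=\max_{\boldsymbol{\lambda}\in\Delta_K}\mathcal{G}(\boldsymbol{\lambda})$, and hence a $\boldsymbol{\lambda}\ge 0$ with $\mathcal{G}(\boldsymbol{\lambda})>0$ whenever the problem is infeasible. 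Your alternative Slater-based route through the auxiliary slack variable $t$ is equally valid and perhaps closer in spirit to what the paper presumably intended; the positive-homogeneity remark confirming that restricting to $\Delta_K$ loses nothing is a clean finishing touch.
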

\begin{proof}
   Please refer to Appendix \ref{App_Proof_Lemma_DualFunction_Feasibility}.
\end{proof}

Next, we obtain $\mathcal{G} ( \boldsymbol{\lambda})$ in (\ref{Eq_DualFunction}) for a given $\boldsymbol{\lambda} \ge 0$ by solving the following weighted sum-throughput maximization problem, which follows from (\ref{Eq_Lagrangian}).
\[
   \mathop {\max }\limits_{\boldsymbol{\tau}}  \,\,\,\,\,\sum\limits_{i = 1}^K {{\lambda _i}{R_i}\left( \boldsymbol{\tau}  \right)\,\,\,}
\]
\begin{equation}\label{Eq_WSRProblem}
   {{\rm{s.t.}}\,\,\,\,\,\, \boldsymbol{\tau}  \in \mathcal{D}. \,\,\,\,\,\,\,\,\,\,\,\,\,\,}
\end{equation}

Like ($\rm{P1}$), the problem in (\ref{Eq_WSRProblem}) is convex and thus can be solved by convex optimization techniques. Similar to Proposition \ref{Proposition_SumRateOptimalTime} for the sum-throughput maximization case with $\lambda_i = 1$, $\forall i$, we obtain the optimal time allocation solution for the weighted sum-throughput maximization problem in (\ref{Eq_WSRProblem}), given in the following proposition.

\begin{proposition}\label{Proposition_WeightedSumRate}
   Given $\boldsymbol{\lambda} \ge 0$, the optimal time allocation solution for (\ref{Eq_WSRProblem}), denoted by ${\boldsymbol{\tau}}^{\star}$ $=$ $\left[ {{\tau}_0^{\star},\,\, {\tau}_1^{\star},} \right.$ $\left. {\cdots, {\tau}_K^{\star}} \right]$, is
   \begin{equation}\label{Eq_WSR_TimeAlloc_DL}
      {{\tau} _0^{\star} = \frac{1}{{1 + \sum\limits_{j = 1}^K { \left( \gamma_j / {{z}_j^{\star}} \right) } }},}
   \end{equation}
   \begin{equation}\label{Eq_WSR_TimeAlloc_UL}
      {{\tau} _i^{\star} = \frac{ \gamma_i / {{z}_i^{\star}} }{{1 + \sum\limits_{j = 1}^K { \left( \gamma_j / {{z}_j^{\star}} \right) } }}, \,\,\, i = 1, \,\, \cdots, \,\, K,}
   \end{equation}
   where ${z}_i^{\star}$, $i = 1, \,\, \cdots, \,\, K$, is the solution of the following equations:
   \begin{equation}\label{Eq_WSR_KKT1}
      {\ln \left( {1 + {z_i}} \right) - \frac{{{z_i}}}{{1 + {z_i}}} = \frac{\mu^{\star}}{{{\lambda _i}}}\ln 2, }
   \end{equation}
   \begin{equation}\label{Eq_WSR_KKT2}
      {\sum\limits_{i = 1}^K {\frac{{{\lambda _i}{\gamma_i}}}{{1 + {z_i}}}}  = {{{\mu}^{\star}}}\ln 2,}
   \end{equation}
   with ${\mu}^{\star} > 0$ being a constant.
\end{proposition}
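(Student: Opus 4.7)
The plan is to exploit KKT conditions. By Lemma \ref{Lemma_SumRateConcavity}, each $R_i(\boldsymbol{\tau})$ is concave, so for $\boldsymbol{\lambda}\ge 0$ the weighted sum $\sum_{i=1}^K \lambda_i R_i(\boldsymbol{\tau})$ is concave; since $\mathcal{D}$ is polyhedral with nonempty interior, Slater's condition holds and the KKT conditions are both necessary and sufficient for optimality of (\ref{Eq_WSRProblem}). I would first note that any optimum satisfies $\sum_{i=0}^K \tau_i^{\star}=1$, because each $R_i$ is strictly increasing in $\tau_0$ and $\tau_i$, so slack in (\ref{Eq_SumTime}) could be redistributed to strictly improve the objective; the nonnegativity constraints may be shown inactive post hoc once $\tau_i^{\star}>0$ falls out of the closed-form expressions.

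Introducing a single multiplier $\mu\ge 0$ for the sum constraint, I compute the partial derivatives
\[
\frac{\partial R_i}{\partial \tau_i}=\log_2\!\left(1+\frac{\gamma_i\tau_0}{\tau_i}\right)-\frac{1}{\ln 2}\cdot\frac{\gamma_i\tau_0/\tau_i}{1+\gamma_i\tau_0/\tau_i},\qquad \frac{\partial R_i}{\partial \tau_0}=\frac{1}{\ln 2}\cdot\frac{\gamma_i}{1+\gamma_i\tau_0/\tau_i}.
\]
Substituting the change of variables $z_i:=\gamma_i\tau_0/\tau_i$, the stationarity equations $\lambda_i\,\partial R_i/\partial \tau_i=\mu$ (for $i\ge 1$) and $\sum_{i=1}^K\lambda_i\,\partial R_i/\partial \tau_0=\mu$ become, after multiplying by $\ln 2$, precisely (\ref{Eq_WSR_KKT1}) and (\ref{Eq_WSR_KKT2}) with $\mu^{\star}=\mu$. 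Inverting $z_i=\gamma_i\tau_0/\tau_i$ gives $\tau_i^{\star}=\gamma_i\tau_0^{\star}/z_i^{\star}$, and substituting into the active budget $\tau_0^{\star}+\sum_{i=1}^K \gamma_i\tau_0^{\star}/z_i^{\star}=1$ immediately yields the closed forms (\ref{Eq_WSR_TimeAlloc_DL}) and (\ref{Eq_WSR_TimeAlloc_UL}).

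The main obstacle is to certify that (\ref{Eq_WSR_KKT1})--(\ref{Eq_WSR_KKT2}) do determine a unique positive pair $(\mu^{\star},z_i^{\star})$, so that the formulas are well-defined. I would analyze $g(z):=\ln(1+z)-z/(1+z)$: a direct computation yields $g'(z)=z/(1+z)^2>0$ for $z>0$, with $g(0)=0$ and $g(z)\to\infty$ as $z\to\infty$, so $g$ is a strictly increasing bijection of $[0,\infty)$ onto itself. Hence (\ref{Eq_WSR_KKT1}) uniquely defines $z_i(\mu):=g^{-1}(\mu\ln 2/\lambda_i)$, strictly increasing in $\mu$. Substituting into (\ref{Eq_WSR_KKT2}), the left-hand side $\sum_i \lambda_i\gamma_i/(1+z_i(\mu))$ is continuous and strictly decreases from $\sum_i\lambda_i\gamma_i>0$ as $\mu\to 0^{+}$ to $0$ as $\mu\to\infty$, while the right-hand side $\mu\ln 2$ is continuous and strictly increases from $0$ to $\infty$; by the intermediate value theorem a unique crossing $\mu^{\star}>0$ exists. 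This pins down unique $z_i^{\star}>0$, and hence, via the inversion step, the unique optimal $\boldsymbol{\tau}^{\star}$ claimed in the proposition.
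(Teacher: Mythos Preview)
Your proof is correct and follows essentially the same approach as the paper: form the Lagrangian with a single multiplier $\mu$ for the sum-time constraint, write the KKT stationarity conditions, substitute $z_i=\gamma_i\tau_0/\tau_i$ to obtain (\ref{Eq_WSR_KKT1})--(\ref{Eq_WSR_KKT2}), and recover (\ref{Eq_WSR_TimeAlloc_DL})--(\ref{Eq_WSR_TimeAlloc_UL}) from the active budget constraint. Your treatment is in fact slightly more thorough than the paper's, which asserts uniqueness of $(z_i^\star,\mu^\star)$ by counting equations and invoking monotonicity, whereas you give an explicit intermediate-value-theorem argument for the existence and uniqueness of $\mu^\star>0$.
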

\begin{proof}
   Please refer to Appendix \ref{App_Proof_Lemma_DualFuntion_Feasibility}.
\end{proof}

With Proposition \ref{Proposition_WeightedSumRate}, we can compute ${\boldsymbol{\tau}}^{\star}$ efficiently as follows. Denote the left-hand sides (LHSs) of (\ref{Eq_WSR_KKT1}) and (\ref{Eq_WSR_KKT2}) as $Q \left( z_i \right)$ and $S \left( \bf{z} \right)$ with ${\bf{z}} = [z_1, \,\, z_2, \,\, \cdots, \,\, z_K]$, respectively. Note that $Q \left( z_i \right)$ is an increasing function of $z_i$, $i=1, \,\, \cdots, K$ (see Appendix \ref{App_Proof_Proposition_SumRateOptimalTime}), whereas $S \left( \bf{z} \right)$ is a decreasing function with respect to each individual $z_i$. Given any $\mu > 0$, suppose that $z_i$ is the solution of $Q \left( z_i \right) = \frac{\mu }{{{\lambda _i}}}\ln 2$, $i=1, \,\, \cdots, K$, in (\ref{Eq_WSR_KKT1}). With these $z_i$'s, there are two possible cases to consider next. If in (\ref{Eq_WSR_KKT2}) the resulting $S \left( \bf{z} \right) > \mu \ln 2$, we should increase $\mu$ since $z_i$'s satisfying $Q \left( z_i \right) = \frac{\mu }{{{\lambda _i}}}\ln 2$, $i=1, \,\, \cdots, K$, will increase with $\mu$ given that $Q \left( z_i \right)$, $\forall i$, is an increasing function of $z_i$; as a result, $S \left( \bf{z} \right)$ will decrease since it is a decreasing function of each individual $z_i$. Otherwise, $\mu$ should be decreased to satisfy (\ref{Eq_WSR_KKT2}) if $S \left( \bf{z} \right) < \mu \ln 2$. Therefore, ${z}_i^{\star}$'s and $\mu^{\star}$ can be obtained by iteratively updating $z_i$'s and $\mu$ as above until convergence is reached. Then, ${\boldsymbol{\tau}}^{\star}$ can be computed from (\ref{Eq_WSR_TimeAlloc_DL}) and (\ref{Eq_WSR_TimeAlloc_UL}) accordingly.

\begin{table}[!t]
\renewcommand{\arraystretch}{1.3}
\caption{Algorithm to solve $(\rm{P2}).$}
\label{Table_Algorithm} \centering
   \begin{tabular}{|p{3.4in}|}
   \hline
      1)  \textbf{Initialize} $R_{\rm{min}} = 0$, $R_{\rm{max}} > \bar{R}^{*}$.\footnotemark

      2) \textbf{Repeat}
         \begin{itemize}
               \item[1.] $\bar{R} = \frac{1}{2} \left( R_{\rm{min}} + R_{\rm{max}} \right)$.

               \item[2.] Initialize $\boldsymbol{\lambda} \ge 0$.

               \item[3.] Given $\boldsymbol{\lambda}$, solve the problem in (\ref{Eq_WSRProblem}) by Proposition \ref{Proposition_WeightedSumRate}.

               \item[4.] Compute $\mathcal{G} \left( {\boldsymbol{\lambda}} \right)$ using (\ref{Eq_Lagrangian}).

               \item[5.] If ${\mathcal{G}} \left( {\boldsymbol{\lambda}} \right) > 0$, $\bar R$ is infeasible, set ${R_{\max }} \leftarrow \bar R$, go to step 1. \

                   Otherwise, update $\boldsymbol \lambda$ using the ellipsoid method and the subgradient of ${\mathcal{G}} \left( {\boldsymbol{\lambda}} \right)$ given by (\ref{Eq_Subgradient}). If the stopping criteria of the ellipsoid method is not met, go to step 3.

               \item[6.] Set ${R_{\min }} \leftarrow \bar R$.\

         \end{itemize}

      3) \textbf{Until} ${R_{\max }} - {R_{\min }} < \delta$, where $\delta > 0$ is a given error tolerance.
            \\
   \hline
   \end{tabular}
\end{table}
\footnotetext{The initial value of $R_{\max}$ can be chosen as any arbitrary large number such that it satisfies $R_{\max} > \bar{R}^{*}$.}

Given $\bar R$, $\boldsymbol{\lambda}$, and the obtained $\boldsymbol{\tau}^{\star}$ by solving problem (\ref{Eq_WSRProblem}) with Proposition \ref{Proposition_WeightedSumRate}, we can compute the corresponding $R_i \left( \boldsymbol{\tau}^{\star} \right)$, $i=1, \,\, \cdots, \,\, K$, and thus ${\mathcal{G}}\left( \boldsymbol{\lambda}  \right)$ in (\ref{Eq_DualFunction}) using (\ref{Eq_Lagrangian}). If ${\mathcal{G}}\left( \boldsymbol{\lambda}  \right) > 0$, it follows from Lemma \ref{Lemma_DualFuntion_Feasibility} that problem (\ref{Eq_FeasibilityProblem}) is infeasible, i.e., $\bar R > \bar R^{*}$. Therefore, we should decrease $\bar R$ and solve the feasibility problem in (\ref{Eq_FeasibilityProblem}) again. On the other hand, if ${\mathcal{G}}\left( \boldsymbol{\lambda}  \right) \le 0$, we can update $\boldsymbol{\lambda}$ using sub-gradient based algorithms, e.g. the ellipsoid method \cite{LectureNote}, with the subgradient of ${\mathcal{G}} \left( {\boldsymbol{\lambda}} \right)$, denoted by ${\boldsymbol{\upsilon}} = {[ {\upsilon _1\,\,\,\upsilon _2\,\,\, \cdots \,\,\,\upsilon _K} ]^T}$, given by
\begin{equation}\label{Eq_Subgradient}
   {\upsilon _i = \tau _i^{\star}{\log _2}\left( {1 + \frac{{{\gamma _i}\tau _0^{\star}}}{{\tau _i^{\star}}}} \right) - \bar{R}, \,\,\, 1 \le i \le K,}
\end{equation}
until $\boldsymbol{\lambda}$ converges to $\boldsymbol{\lambda}^{*}$ with $\boldsymbol{\lambda}^{*}$ denoting the maximizer of ${\mathcal{G}}\left( \boldsymbol{\lambda}  \right)$ or the optimal dual solution for problem (\ref{Eq_FeasibilityProblem}). If ${\mathcal{G}}\left( \boldsymbol{\lambda}^{*}  \right) \le 0$, it then follows that problem (\ref{Eq_FeasibilityProblem}) is feasible and thus $\bar{R} \le \bar{R}^*$. In this case, $\bar R$ should be increased for solving the feasibility problem in (\ref{Eq_FeasibilityProblem}) again. Consequently, $\bar{R}^*$ can be obtained numerically by iteratively updating $\bar R$ by a simple bisection search \cite{Boyd}. To summarize, one algorithm to solve $(\rm{P2})$ is given in Table \ref{Table_Algorithm}.\footnote{The computational complexity of the proposed algorithm in Table \ref{Table_Algorithm} can be shown to be $\mathcal{O}(K^3)$ since at each iteration it performs $K$ one-dimension searches each with the complexity of $\mathcal{O}(1)$ to find $\boldsymbol{\tau}^{\star}$, and the ellipsoid method has the complexity of $\mathcal{O}(K^2)$ \cite{LectureNote} to converge.}

Fig. \ref{Fig_Example_MinThroughput_N_2} shows the common-throughput in bps/Hz versus $\tau_1$ and $\tau_2$ for the same two-user channel setup as for Fig. \ref{Fig_Example_SumThroughput_N_2}. It is observed that the optimal time allocation for (\rm{P2}) is given by $\boldsymbol{\tau}^* = [0.3683, \,\,\,\, 0.1386, \,\,\,\, 0.4932]$, which results in $\bar{R}^{*} = R_1 \left( {\boldsymbol{\tau}}^{*} \right) = R_2 \left( {\boldsymbol{\tau}}^{*} \right) = 1.46$bps/Hz. Comparing to Fig. \ref{Fig_Example_SumThroughput_N_2} where the sum-throughput is maximized, the time portion allocated to the near user, $U_1$, is decreased substantially from $0.7114$ to $0.1737$, while that to the far user, $U_2$, is greatly increased from $0.0445$ to $0.4669$. Consequently, the throughput of $U_2$ is increased from $0.45$bps/Hz to $1.46$bps/Hz, while that of $U_1$ is decreased from $4.13$bps/Hz to $1.46$bps/Hz, the same throughput as $U_2$. This result shows the effectiveness of the proposed common-throughput approach for tackling the doubly near-far problem in a WPCN.

\begin{figure}
   \centering
   \includegraphics[width=0.7\columnwidth]{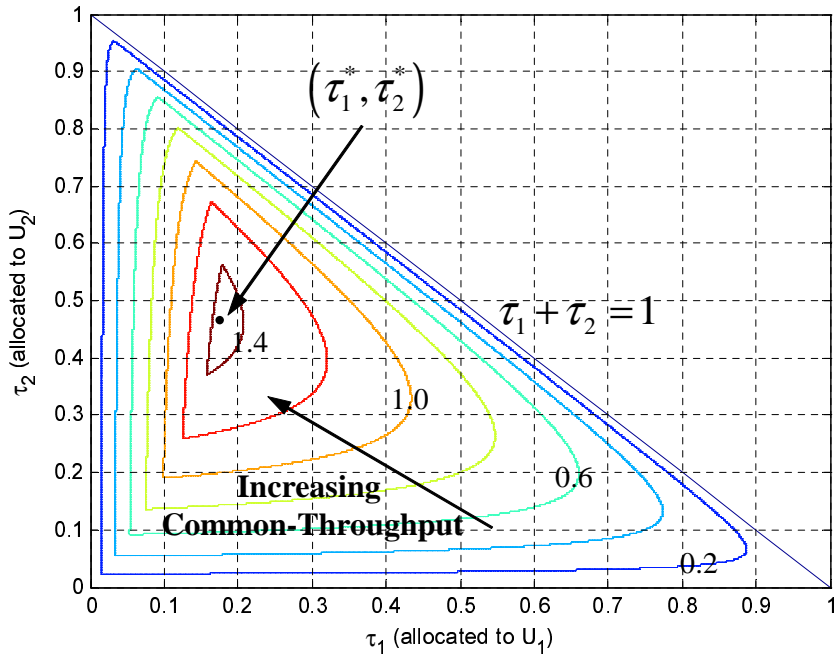}
   \caption{Common-throughput (in bps/Hz) versus time allocation. }
   \label{Fig_Example_MinThroughput_N_2}
\end{figure}

Fig. \ref{Fig_Ratio_TimeAllocation_STM_vs_CTM} shows the ratio of the optimal time allocated to the far user $U_2$ over that to the near user $U_1$, i.e., $\tau_2^* / \tau_1^*$, in ($\rm{P1}$) versus ($\rm{P2}$), with different values of the common pathloss exponent $\alpha$ in both DL and UL, where the same two-user channel setup as for Fig. \ref{Fig_Example_SumThroughput_N_2} is considered. It is assumed that $\gamma_1$ for the near user $U_1$ is fixed as $\gamma_1 = 22$dB and $\gamma_2$ for the far user $U_2$ is set the same as for Fig. \ref{Fig_WPCN_vs_TDMA_N_2}. It is observed that the time ratio of ($\rm{P1}$) in the logarithm scale decreases linearly with $\alpha$ to maximize the sum-throughput, which can also be inferred from (\ref{Eq_Proposition_OptTimeAlloc}), due to the doubly near-far problem. On the contrary, $\tau_2^* / \tau_1^*$ is observed to increase with $\alpha$ to maximize the common-throughput in ($\rm{P2}$), since more time is allocated to the far user $U_2$ instead of the near user $U_1$ as the ratio between $\gamma_1$ and $\gamma_2$, i.e., $\gamma_1/\gamma_2$, increases with $\alpha$.

\begin{figure}
   \centering
   \includegraphics[width=0.7\columnwidth]{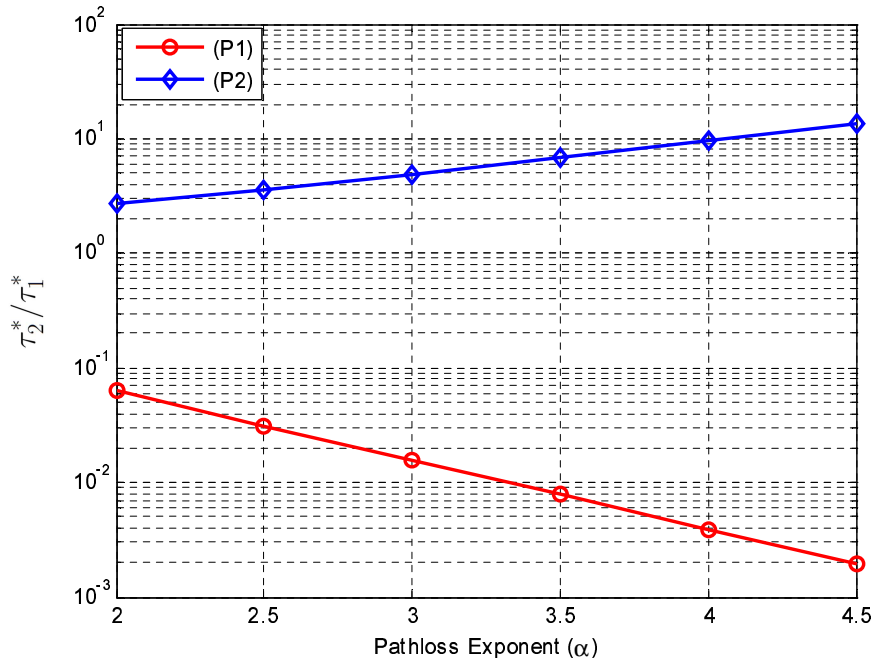}
   \caption{Comparison of the ratio of time allocated to $U_2$ and $U1$ in ($\rm{P1}$) versus ($\rm{P2}$). }
   \label{Fig_Ratio_TimeAllocation_STM_vs_CTM}
\end{figure}

Notice that ($\rm{P1}$) and ($\rm{P2}$) deal with two extreme cases of throughput allocation to the users in a WPNC where the fairness is completely ignored and a strict equal fairness is imposed, respectively. More generally, Fig. \ref{Fig_Throughput_Region} shows the achievable throughput region of a two-user WPCN by solving the weighted sum-throughput maximization problem in (\ref{Eq_WSRProblem}) with different throughput weights for the near and far users, under the same channel setup as for Fig. \ref{Fig_Example_SumThroughput_N_2}. It is observed that the boundary of the throughput region characterizes all the optimal throughput-fairness trade-offs in this two-user WPCN, which include the throughput pairs obtained by solving ($\rm{P1}$) for the maximum sum-throughput and by solving ($\rm{P2}$) for the maximum common-throughput, shown as points ($\rm{a}$) and ($\rm{b}$) in the figure, respectively.

\begin{figure}
   \centering
   \includegraphics[width=0.7\columnwidth]{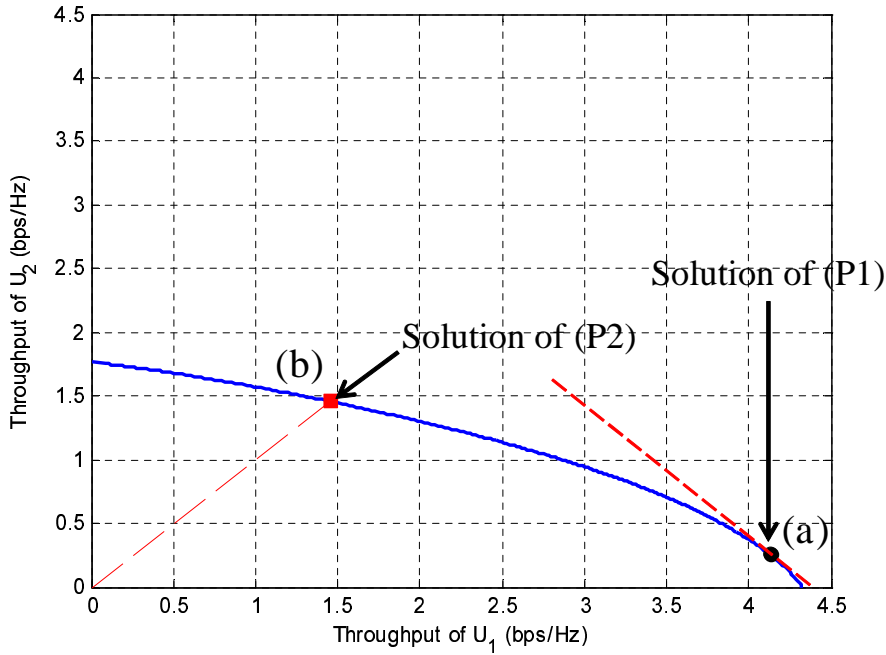}
   \caption{Throughput region of a two-user WPCN with $(\rm{a})$ corresponding to the maximum sum-throughput; and $(\rm{b})$ corresponding to the  maximum common-throughput.}
   \label{Fig_Throughput_Region}
\end{figure}

\begin{remark}\label{Remark_DifferentRate}
   It is worth noting that the common-throughput approach for characterizing the achievable rate region of multi-user communication systems under strict fairness constraints can be considered as one special case of the ``rate-profile'' method proposed in \cite{Mohseni}. Hence, the common-throughput approach investigated in this paper can be easily extended to the general case where the required throughput of each user is different using the rate-profile method. Given ${\bf{\bar R}} = \left[ {{{\bar R}_1}\,\,\,{{\bar R}_2}\,\, \cdots \,\,{{\bar R}_K}} \right]$ with ${\bar R_i}$ denoting the required throughput of user $i$, $i = 1, \,\, \cdots, \,\, K$, the corresponding rate profile vector is defined as ${\boldsymbol{\beta }} = \left[ {{\beta _1}\,\,{\beta _2}\,\, \cdots \,\,{\beta _K}} \right]$ where ${\beta _i} = {\bar R_i}/\sum\limits_{j = 1}^K {{{\bar R}_j}}$ (Note that the common-throughput maximization problem ($\rm{P2}$) is thus for a special case with $\beta_i = 1/K, \forall i$). The optimal time allocation solution to maximize the system sum-throughput subject to the rate fairness constraint with any given $\boldsymbol{\beta}$ can be obtained using the same algorithm proposed for ($\rm{P2}$) in this paper, with the throughput constraint in (\ref{Eq_CommonRateMaxConstraint1}) replaced by ${R_i}\left( {\bf{\tau }} \right) \ge {\beta _i}\bar R$, $i = 1,\,\, \cdots ,\,\,K$, where $\bar R$ here denotes the sum-throughput of all users.
\end{remark}

\section{Simulation Result}\label{SimulationResult}
In this section, we compare the maximum sum-throughput by ($\rm{P1}$) versus the maximum common-throughput by ($\rm{P2}$) in an example WPCN. The bandwidth is set as $1$MHz. It is assumed that the channel reciprocity holds for the DL and UL and thus $h_i = g_i$, $i=1,\, \cdots, \, K$, with the same pathloss exponent $\alpha_d = \alpha_u = \alpha$. Accordingly, both the DL and UL channel power gains are modeled as ${{h}_i} = {{g}_i} = 10^{-3}{\rho _{i}^2} { {D_i^{ - \alpha}}}$, $i=1,\, \cdots, \, K$, where $\rho_i$ represents the additional channel short-term fading which is assumed to be Rayleigh distributed, and thus $\rho_i^2$ is an exponentially distributed random variable with unit mean. Note that in the above channel model, a $30$dB average signal power attenuation is assumed at a reference distance of $1$m. The AWGN at the H-AP receiver is assumed to have a white power spectral density of $-160$dBm/Hz. For each user, the energy harvesting efficiency for WET is assumed to be $\zeta = 0.5$. Finally, we set $\Gamma = 9.8$dB assuming that an uncoded quadrature amplitude modulation (QAM) is employed \cite{Goldsmith}.

Fig. \ref{Fig_AvgThroughput} shows the maximum sum-throughput versus the maximum common-throughput in the same WPCN with $K = 2$, $D_1 = 5$m, and $D_2 = 10$m for different values of transmit power at H-AP, $P_A$, in dBm, by averaging over $1000$ randomly generated fading channel realizations, with fixed $\alpha = 2$. As shown in Fig. \ref{Fig_AvgThroughput}, when the sum-throughput is maximized, the throughput of $U_1$ dominates over that of $U_2$ due to the doubly near-far problem, which results in notably unfair rate allocation between the near user ($U_1$) and far user ($U_2$) in this example. It is also observed that the maximum common-throughput for the two users is smaller than the normalized maximum sum-throughput by the number of users, i.e., $R_{\rm{sum}} \left( \boldsymbol{\tau} \right)/K$ ($K=2$ in this example), which is a cost to pay in order to ensure a strictly fair rate allocation to the two users regardless of their distances from the H-AP.

\begin{figure}
   \centering
   \includegraphics[width=0.7\columnwidth]{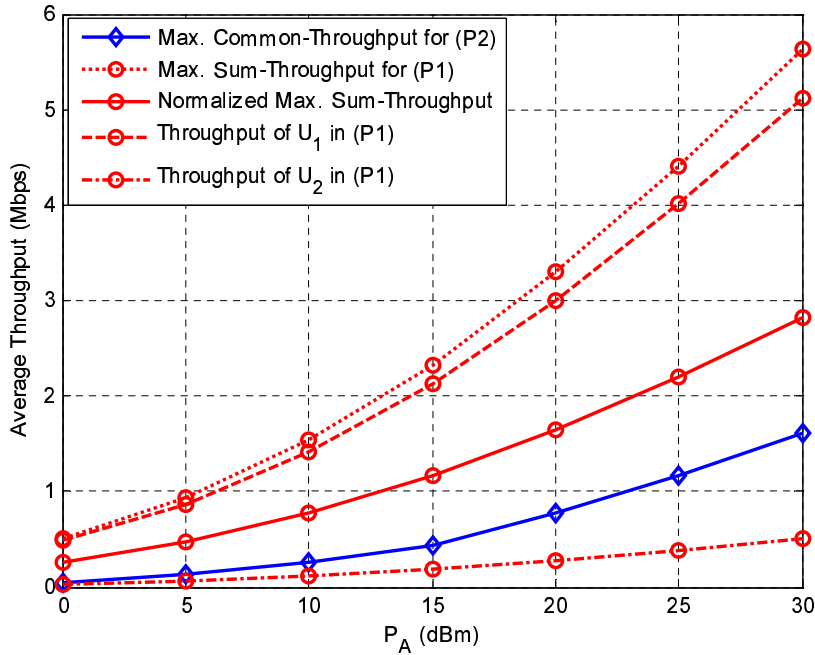}
   \caption{Sum-throughput vs. common-throughput. }
   \label{Fig_AvgThroughput}
\end{figure}

Next, by fixing $P_A = 20$dBm, Fig. \ref{Fig_Throughput_Normalized_by_MaxSumThroughput} shows the throughput comparison for different values of the common pathloss exponent $\alpha$ in both the DL and UL in the same WPCN as for Fig. \ref{Fig_AvgThroughput}. It is observed that when the sum-throughput is maximized, the throughput of the near user $U_1$ converges to the maximum sum-throughput as $\alpha$ increases, whereas that of the far user $U_2$ converges to zero, which indicates that the WPCN suffers from a more severe unfair rate allocation between the near and far users as the pathloss exponent increases, due to the doubly near-far problem. In addition, the maximum common-throughput for the two users is observed to decrease faster with increasing $\alpha$ than the normalized maximum sum-throughput. This is because as $\alpha$ increases, ($\rm{P2}$) allocates more time to the far user $U_2$ instead of near user $U_1$ in order to ensure the equal throughput allocation among users since the ratio $\gamma_1/\gamma_2$ increases with $\alpha$, whereas ($\rm{P1}$) allocates more time to $U_1$ instead of $U_2$ as $\alpha$ increases.

\begin{figure}
   \centering
   \includegraphics[width=0.7\columnwidth]{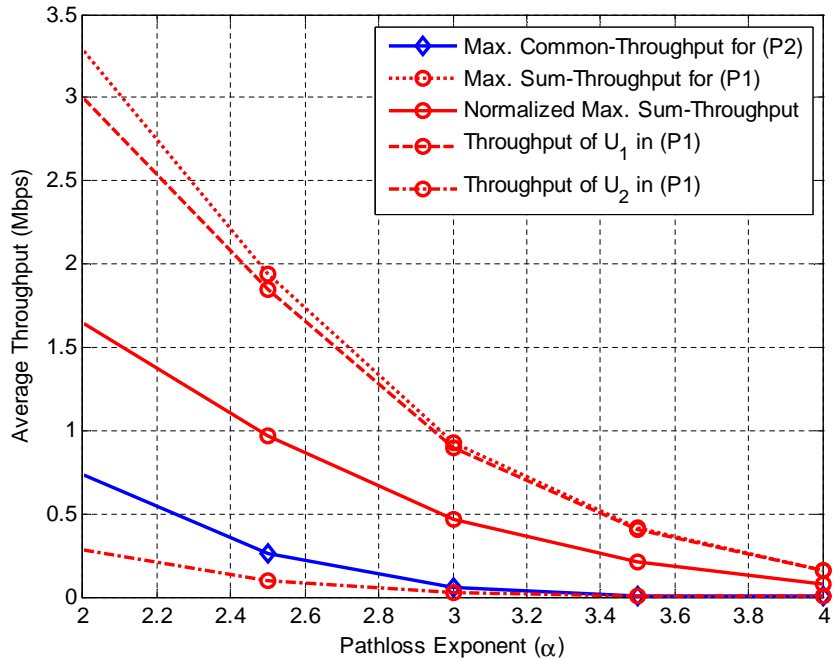}
   \caption{Throughput vs. pathloss exponent. }
   \label{Fig_Throughput_Normalized_by_MaxSumThroughput}
\end{figure}

At last, Fig. \ref{Fig_AvgThroughput_vs_NumUser} shows the throughput over number of users, $K$. It is assumed that $K$ users in the network are equally separated from the H-AP according to $D_i = \frac{D_K}{K} \times i$, $i=1, \, \cdots \, K$, where $D_K = 10$m. The transmit power at the H-AP and the pathloss exponent are set to be fixed as $P_A = 20$dBm and $\alpha = 2$, respectively. In addition, we compare with the throughput achievable by equal time allocation (ETA), i.e., $\tau_i = \frac{1}{K+1}$, $i=0, \, \cdots, \, K$, as a low-complexity time allocation scheme. It is observed that both the normalized maximum sum-throughput by solving (P1) and the maximum common-throughput by solving (P2) decreases with increasing $K$, and they outperform the sum-throughput and the minimum throughput (over all users) by the heuristic ETA scheme, respectively.

\begin{figure}
   \centering
   \includegraphics[width=0.7\columnwidth]{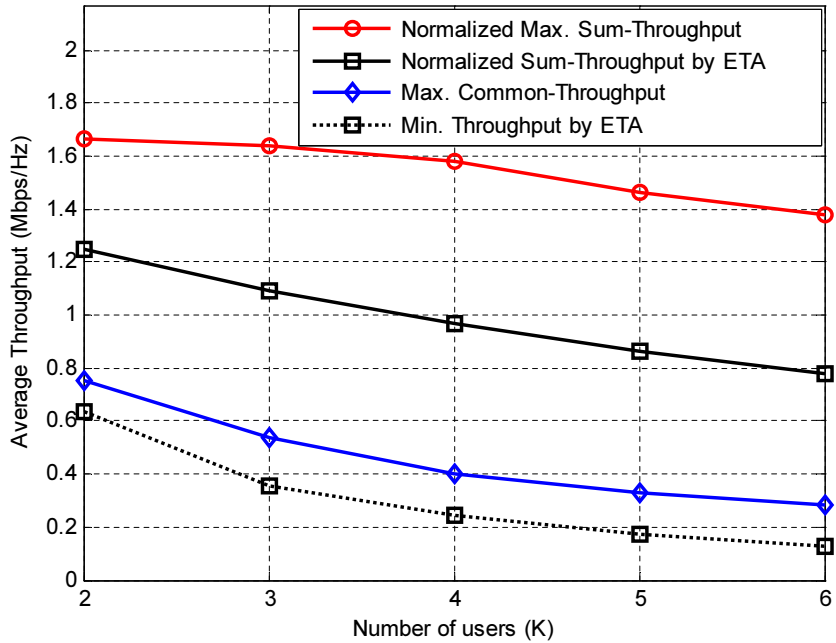}
   \caption{Throughput vs. number of users ($K$). }
   \label{Fig_AvgThroughput_vs_NumUser}
\end{figure}

\section{Conclusion} \label{Conclusion}
This paper has studied a new type of wireless RF (radio frequency) powered communication network with a harvest-then-transmit protocol, where the H-AP first broadcasts wireless energy to distributed users in the downlink and then the users transmit their independent information to the H-AP in the uplink by TDMA. Our results reveal an interesting new phenomenon in such hybrid energy-information transmission networks, so-called doubly near-far problem, which is due to the folded signal attenuation in both the downlink WET and uplink WIT. As a result, notably unfair time and throughput allocation among the users occurs when the conventional metric of network sum-throughput is maximized. To overcome this problem, we propose a new common-throughput maximization approach to allocate equal rates to all users regardless of their distances from the H-AP by allocating the transmission time to users inversely proportional to their distances to the H-AP. Simulation results showed that this approach is effective in solving the doubly near-far problem in the WPCN, but at a cost of sum-throughput degradation.

\appendices

   \section{Proof of Lemma \ref{Lemma_SumRateConcavity}}\label{App_Proof_Lemma_SumRateConcavity}
   Denote the Hessian of $R_i\left( \boldsymbol{\tau}  \right)$ defined in (\ref{Eq_Rate_Reference}) as
   \[
      {{\nabla ^2}R_i \left( \boldsymbol{\tau}  \right) = \left[ {{d_{j,m}^{(i)}}} \right], \,\,\, 0 \le j,m \le K,}
   \]
   where $d_{j,m}^{(i)}$ denotes the element of ${\nabla ^2}R_i\left( \boldsymbol{\tau}  \right)$ at the $j$th row and $m$th column. From (\ref{Eq_Rate_Reference}), the diagonal entries of ${\nabla ^2}R_i \left( \boldsymbol{\tau}  \right)$, i.e., $j = m$, can be expressed as
   \begin{equation}\label{App_D_nn}
      {{d_{j,j}^{(i)}} = \left\{ {\begin{array}{*{20}{c}}
      { - \frac{1}{{\ln 2}} {\gamma _i^2} {\tau _i ^ {-1}} \beta _i ^{-2} }  \\
      { - \frac{1}{{\ln 2}} {\gamma _i^2\tau _{0}^2} {\tau _i^{-3}}\beta_i ^{-2}}  \\
      0  \\
      \end{array}\begin{array}{*{20}{c}}
      {,\,\,\,\,\,\,\,\,\,\, j = 0 \,\,\,\,\,\,\,}  \\
      {,\,\,\,\,\,\,\,\,\,\, j = i \,\,\,\,\,\,\,\,\,}  \\
      {,\,\,\,{\rm{otherwise, \,\,}}}  \\
      \end{array}} \right.}
   \end{equation}
   where $\beta_i = 1 + \frac{{{\gamma _i}{\tau _{0}}}}{{{\tau _i}}}$. In addition, the off-diagonal entries of ${\nabla ^2}R_i \left( \boldsymbol{\tau}  \right)$ can be expressed as
   \begin{equation}\label{App_D_nm}
      {{d_{j,m}^{(i)}} = {d_{m,j}^{(i)}} = \left\{ {\begin{array}{*{20}{c}}
      { \frac{1}{{\ln 2}} {{\gamma _i^2}{\tau_0}}{\tau _i^{-2}} \beta_i^{-2} }  \\
      0  \\
      \end{array}\begin{array}{*{20}{c}}
      {,\,\,\,j = i\,\,\,{\rm{and}}\,\,\,m = 0}  \\
      {,\,\,\,\,\,\,\,\,\,\,\,  {\rm{otherwise.}} \,\,\,\,\,\,\,\,\,\, }  \\
      \end{array}} \right.}
   \end{equation}

   Given an arbitrary real vector ${\bf{v}} = [v_0, \,\, v_1, \,\, \cdots, \,\, v_K ]^T $, since $\tau_i \ge 0$, it can be shown from (\ref{App_D_nn}) and (\ref{App_D_nm}) that
   \[
      {{{\bf{v}}^T}{\nabla ^2}R_i\left( \boldsymbol{\tau}  \right){\bf{v}} = - \frac{1}{{\ln 2}}\frac{1}{\beta_i^2}\frac{{\gamma _i^2}}{{\tau _i^3}}{\left( {{v_i}{\tau _{0}} - {v_{0}}{\tau _i}} \right)^2}}
   \]
   \[
      {\,\, \le 0, \,\,\,\,\,\,\,\,\,\,\,\,\,\,\,\,\,\,\,\,\,\,\,\,}
   \]
   i.e., ${\nabla ^2}R_i \left( \boldsymbol{\tau}  \right)$, $\forall i$, is a negative semidefinite matrix. Therefore, $R_i \left( \boldsymbol{\tau}  \right)$ is a concave function of ${{\boldsymbol{\tau }}} = {\left[ {\tau _0\,\,\,\tau _1\,\,\, \cdots \,\,\,\tau _{N}} \right]^T}$ \cite{Boyd}. This completes the proof of Lemma \ref{Lemma_SumRateConcavity}.

   \section{Proof of Lemma \ref{Lemma_ExistenceOfSolution}}\label{App_Proof_Lemma_ExistenceOfSolution}
   From (\ref{Eq_Lemma_Existence}), we have
   \begin{equation}\label{App_Lemma_ZeroApproach}
      {\mathop {\lim }\limits_{z \to 0} f\left( z \right) = 1,}
   \end{equation}
   \begin{equation}\label{App_Lemma_1st_Derivative}
      {\frac{{\partial f\left( z \right)}}{{\partial z}} = \ln z,}
   \end{equation}
   \begin{equation}\label{App_Lemma_2nd_Derivative}
      {\frac{{{\partial ^2}f\left( z \right)}}{{\partial {z^2}}} = \frac{1}{z}.}
   \end{equation}
   Thus it follows from (\ref{App_Lemma_1st_Derivative}) and (\ref{App_Lemma_2nd_Derivative}) that $f \left(z \right)$ is a convex function over $z \ge 0$ and its minimum is attained at $z = 1$ with $f \left( 1 \right) = 0$. This implies that $f \left(z \right) \ge 0$ with $z \ge 1$ and is monotonically increasing with $z$ in this regime. Therefore, given $A > 0$, $f \left( z \right) = A$ has a unique solution $z^* > 1$. This completes the proof of Lemma \ref{Lemma_ExistenceOfSolution}.

   \section{Proof of Proposition \ref{Proposition_SumRateOptimalTime}}\label{App_Proof_Proposition_SumRateOptimalTime}
   The Lagrangian of $(\rm{P1})$ is given by
   \begin{equation}\label{App_P1_Lagrangian}
      {\mathcal{L}_{\rm{sum}}\left( {\boldsymbol{\tau} ,\nu } \right) = {{R_{\rm{sum}}}\left( \boldsymbol{\tau}  \right)}  - \nu \left( {\sum\limits_{i = 0}^K {{\tau _i}} } - 1 \right),}
   \end{equation}
   where $\nu \ge 0$ denotes the Lagrange multiplier associated with the constraint in (\ref{Eq_SumTime}). The dual function of ($\rm{P1}$) is thus given by
   \begin{equation}\label{App_DualFunction}
      {\mathcal{G}\left( \nu  \right) = \mathop {\min }\limits_{\boldsymbol{\tau}  \in \mathcal{D}} \,\,{\mathcal{L}_{{\rm{sum}}}}\left( {\boldsymbol{\tau} ,\nu } \right),}
   \end{equation}
   where $\mathcal{D}$ is the feasible set of $\boldsymbol{\tau}$ specified by (\ref{Eq_SumTime}) and (\ref{Eq_SumRateMaxConstraint2}).

   It can be shown from (\ref{Eq_SumTime}) and (\ref{Eq_SumRateMaxConstraint2}) that there exists an $\boldsymbol{\tau} \in \mathcal{D}$ with $\tau_i > 0$, $i = 0, \,\,1, \,\, \cdots \,\, K$, satisfying $\sum\limits_{i = 0}^K {{\tau _i}} < 1$, and thus strong duality holds for this problem thanks to the Slater's condition \cite{Boyd}. Since $(\rm{P1})$ is a convex optimization problem for which the strong duality holds, the Karush-Kuhn-Tucker (KKT) conditions are both necessary and sufficient for the global optimality of ($\rm{P1}$), which are given by
   \begin{equation}\label{App_OptimalityCondition1}
      {\sum\limits_{i = 0}^{K} {\tau _i^*}  \le 1,}
   \end{equation}
   \begin{equation}\label{App_OptimalityCondition2}
      {\nu^* \left( {\sum\limits_{i = 0}^K {{\tau _i^*}} } - 1 \right) = 0,}
   \end{equation}
   \begin{equation}\label{App_OptimalityCondition3}
      {\frac{\partial }{{\partial {\tau _i}}}R_{\rm{sum}} \left( \boldsymbol{\tau}^* \right) - \nu^* = 0, \,\,\,\, i = 0, \,\, \cdots, \,\,K,}
   \end{equation}
   where $\tau_i^*$'s and $\nu^*$ denote the optimal primal and dual solutions of ($\rm{P1}$), respectively. It can be easily verified that ${\sum\limits_{i = 0}^{K} {\tau _i^*} = 1}$ must hold for ($\rm{P1}$) and thus from (\ref{App_OptimalityCondition2}) without loss of generality, we assume $\nu^* > 0$. From (\ref{App_OptimalityCondition3}), it follows that
   \begin{equation}\label{App_OptimalityCondition4}
      {\sum\limits_{i = 1}^K {\frac{{{\gamma_i}}}{{1 + {\gamma_i}\frac{{\tau _{0}^*}}{{\tau _i^*}}}}} = \nu^* {\ln 2} ,}
   \end{equation}
   \begin{equation}\label{App_OptimalityCondition5}
      {  t \left( {\gamma_i}\frac{{\tau _{0}^*}}{{\tau _i^*}} \right) = \nu^*{\ln 2},\,\,\,\,\,1 \le i \le K,}
   \end{equation}
   where $t\left( x \right)$ is defined as
   \begin{equation}\label{App_MonotonicFunction}
      {t\left( x \right) \buildrel \Delta \over =  {\ln \left( {1 + x} \right) - \frac{x}{{1 + x}}}, \,\,\,\,\, x \ge 0.}
   \end{equation}

   Given $1 \le i,j \le K$, from (\ref{App_OptimalityCondition4}) we have
   \begin{equation}\label{App_Equality_in_MotonoticFunction}
      {t \left( {\gamma_i}\frac{{\tau _{0}^*}}{{\tau _i^*}} \right) = t \left( {\gamma_j}\frac{{\tau _{0}^*}}{{\tau _j^*}} \right), \,\,\,\,\, i \ne j.}
   \end{equation}
   It can be easily shown that $t\left( x \right)$ is a monotonically increasing function of $x \ge 0$ since $dt\left( x \right)/dx = x{\left( {1 + x} \right)^{ - 2}} \ge 0$ for $x \ge 0$. Therefore, equality in (\ref{App_Equality_in_MotonoticFunction}) holds if and only if ${\gamma_i}\frac{{\tau _{0}^*}}{{\tau _i^*}} = {\gamma_j}\frac{{\tau _{0}^*}}{{\tau _j^*}}$, $1 \le i,j \le K$, i.e.,
   \begin{equation}\label{App_TimePortionRatio}
      {\frac{{{\gamma_1}}}{{\tau _1^*}} = \frac{{{\gamma_2}}}{{\tau _2^*}} =  \cdots \frac{{{\gamma_K}}}{{\tau _K^*}} = C.}
   \end{equation}
   Note that $1 - \tau _{0}^* = \sum\limits_{j = 1}^K {\tau _j^*}$ and $\tau _j^* = \frac{{{\gamma_j}}}{{{\gamma_i}}}\tau _i^*$ from (\ref{App_OptimalityCondition1}) and (\ref{App_TimePortionRatio}), respectively. Therefore, $\tau _i^*$ can be expressed as
   \begin{equation}\label{App_Opt_Tau_n}
      {\tau _i^* = \left( {1 - \tau _{0}^*} \right)\frac{{{\gamma_i}}}{{\sum\limits_{j = 1}^K {\gamma_j} }} = \left( {1 - \tau _{0}^*} \right)\frac{{{\gamma_i}}}{A},}
   \end{equation}
   where $A = \sum\limits_{j = 1}^K {{\gamma_j}}$. In addition, it follows from (\ref{App_OptimalityCondition3}), (\ref{App_TimePortionRatio}), and (\ref{App_Opt_Tau_n}) that
   \begin{equation}\label{App_Equality2}
      {\ln \left( {1 + C\tau _{0}^*} \right) - \frac{{C\tau _{0}^*}}{{1 + C\tau _{0}^*}} = \frac{A}{{1 + C\tau _{0}^*}},\,\,\,\,\,1 \le i \le K,}
   \end{equation}
   where $C$ is defined in (\ref{App_TimePortionRatio}). Since $C = \frac{A}{{1 - \tau _{0}^*}}$ from (\ref{App_Opt_Tau_n}), we can modify (\ref{App_Equality2}) as
   \begin{equation}\label{App_Equality3}
      {z\ln z - z - A + 1 = 0,}
   \end{equation}
   where $z = 1 + \frac{{A\tau _{0}^*}}{{1 - \tau _{0}^*}}$. It is observed that $z > 1$ if $A > 0 $ and $0 < \tau _{0}^* < 1$. From Lemma \ref{Lemma_ExistenceOfSolution}, there exists a unique $z^* > 1$ that is the solution of (\ref{App_Equality3}). Therefore, the optimal time allocation to the DL WET is given by
   \begin{equation}\label{App_Opt_Tau_HAP}
      {\tau _{0}^* = \frac{{z^* - 1}}{{A + z^* - 1}}.}
   \end{equation}
   In addition, from (\ref{App_Opt_Tau_n}) and (\ref{App_Opt_Tau_HAP}), the optimal time allocation to the UL WITs, ${\tau _i^*}$, $1 \le i \le K$, is given by
   \begin{equation}\label{App_Opt_Tau_Sensors}
      {\tau _i^* = \frac{{{\gamma_i}}}{{A + z - 1}}.}
   \end{equation}
   This thus proves Proposition \ref{Proposition_SumRateOptimalTime}.

   \section{Proof of Corollary \ref{Corollary_SumThroughputMax_tau0}}\label{App_Proof_Corollary_SumThroughputMax_tau0}
   It can be easily shown from (\ref{Eq_Proposition_OptTimeAlloc}) that $\tau_0^* = 1$ with $A = 0$. From (\ref{App_Equality3}) and (\ref{App_Opt_Tau_HAP}), $\tau_0^*$ can be alternatively expressed as
   \begin{equation}\label{App_Eq_DL_TimeAlloc}
      {\tau_0^* = \frac{z^* - 1}{z^* \ln {z^*}}.}
   \end{equation}
   Given $A \ge 0$ and thus $z^* \ge 1$, both ${z^* \ln {z^*}}$ and $z^* - 1$ in (\ref{App_Eq_DL_TimeAlloc}) increase with $A$ since $z^*$ increases with $A$ as shown in the proof of Lemma \ref{Lemma_ExistenceOfSolution} given in Appendix \ref{App_Proof_Lemma_ExistenceOfSolution}. Furthermore, since $\frac{d}{d z} ({z \ln {z}}) = 1 + \ln {z}$ and $\frac{d}{dz}({z - 1}) = 1$, it follows that $\frac{d}{d z^*} ({z^* \ln {z^*}}) > \frac{d}{dz^*}({z^* - 1})$ with $z^* > 1$, i.e., ${z^* \ln {z^*}}$ increases faster with $z^*$ than $z^* - 1$. Therefore, it can be verified that ${z^* \ln {z^*}}$ increases faster with $A$ than $z^* - 1$, and thus $\tau_0^*$ decreases monotonically with increasing $A$. Finally, it can be shown that $\tau_0^* \to 0$ as $A \to \infty$ from the fact that $z^*$ increases with $A$.

   This thus completes the proof of Corollary \ref{Corollary_SumThroughputMax_tau0}.

   \section{Proof of Lemma \ref{Lemma_DualFuntion_Feasibility}}\label{App_Proof_Lemma_DualFunction_Feasibility}
   We first prove the ``if'' part of Lemma \ref{Lemma_DualFuntion_Feasibility}. If $\boldsymbol{\tau}' \in \mathcal{D}$ is a feasible solution for (\ref{Eq_FeasibilityProblem}) given $\bar R > 0$, i.e., $R_i \left( \boldsymbol{\tau}' \right) \ge \bar R$, $i=1, \,\, \cdots, \,\,K$, then for any $\boldsymbol{\lambda} \ge 0$ it follows from (\ref{Eq_Lagrangian}) that
   \[
      {\mathcal{G}} \left( \boldsymbol{\lambda} \right) \le {\mathcal{L}} \left( \boldsymbol{\tau}', \boldsymbol{\lambda} \right) \le 0,
   \]
   and thus $\mathop {\max }\limits_{\boldsymbol{\lambda}  \ge 0} \,\, \mathcal{G}\left( \boldsymbol{\lambda}  \right) \le 0$, which contradicts with the given assumption that there exists an ${\boldsymbol{\lambda}} \ge 0$ such that ${\mathcal{G}} \left( \boldsymbol{\lambda} \right) > 0$. The ``if'' part is thus proved.

   Next, we prove the ``only if'' part of Lemma \ref{Lemma_DualFuntion_Feasibility} by showing that its transposition is true, i.e, the problem in (\ref{Eq_FeasibilityProblem}) is feasible if ${\mathcal{G}} \left( \boldsymbol{\lambda} \right) \le 0$, $\forall {\boldsymbol{\lambda}} \ge 0$, by contradiction. Suppose that problem (\ref{Eq_FeasibilityProblem}) is feasible and there exits an ${\boldsymbol{\lambda}}'' \ge 0$ where ${\mathcal{G}} \left( \boldsymbol{\lambda}'' \right) > 0$. However, since (\ref{Eq_FeasibilityProblem}) is assumed to be feasible, there exists an $\boldsymbol{\tau}'' \in \mathcal{D}$ such that ${{R_i}\left( {\boldsymbol{\tau}}''  \right) \ge \bar R}$, $\forall i$, resulting in ${\lambda_i}'' \left( {{R_i}\left( {\boldsymbol{\tau}}'' \right) - \bar R} \right) \ge 0$ since $\boldsymbol{\lambda}'' \ge 0$. From (\ref{Eq_Lagrangian}) and (\ref{Eq_DualFunction}), we thus have
   \[
      \mathcal{G} \left( {\boldsymbol{\lambda}}'' \right) \le - \sum\nolimits_{i = 1}^K {{{\lambda_i}''}} \left( {{R_i}\left( {\boldsymbol{\tau}}  \right) - \bar R} \right) \le 0.
   \]
   This contradicts ${\mathcal{G}} \left( {\boldsymbol{\lambda}}'' \right) > 0$, and thus problem (\ref{Eq_FeasibilityProblem}) is feasible if ${\mathcal{G}} \left( \boldsymbol{\lambda} \right) \le 0$, $\forall {\boldsymbol{\lambda}} \ge 0$. The ``only if'' part is thus proved.

   Combining the above proofs of both ``if" and ``only if" parts, Lemma \ref{Lemma_DualFuntion_Feasibility} thus follows.

   \section{Proof of Proposition \ref{Proposition_WeightedSumRate}}\label{App_Proof_Lemma_DualFuntion_Feasibility}
   Given $\boldsymbol{\lambda} \ge 0$, the Lagrangian of problem (\ref{Eq_WSRProblem}) is given by
   \begin{equation}\label{App_WSR_Lagrangian}
      {\mathcal{L}_{\rm{WSR}}\left( {\boldsymbol{\tau} ,\mu } \right) = \sum\limits_{i = 1}^K {\lambda _i}{R_i}\left( \boldsymbol{\tau}  \right)  - \mu \left( {\sum\limits_{i = 0}^K {{\tau _i}} } - 1 \right),}
   \end{equation}
   where $\mu \ge 0$ denotes the Lagrange multiplier associated with the constraint in (\ref{Eq_SumTime}). The dual function of problem (\ref{Eq_WSRProblem}) is thus given by
   \begin{equation}\label{App_WSR_DualFunction}
      {\mathcal{G}_{\rm{WSR}}\left( \mu  \right) = \mathop {\min }\limits_{\boldsymbol{\tau}  \in \mathcal{D}} \,\,{\mathcal{L}_{{\rm{WSR}}}}\left( {\boldsymbol{\tau} ,\mu } \right).}
   \end{equation}
   Similar to ($\rm{P1}$), it can be easily shown that the problem in (\ref{Eq_WSRProblem}) is convex with zero duality gap. Therefore, the following KKT conditions must be satisfied by the optimal primal and dual solutions of problem (\ref{Eq_WSRProblem}):
   \begin{equation}\label{App_Eq_WSR_KKT1}
      {\ln \left( 1 + {\gamma _i}\frac{\tau _0^{\star}}{\tau _i^{\star}} \right) - \frac{{\gamma _i}\frac{\tau _0^{\star}}{\tau _i^{\star}}}{1 + {\gamma _i}\frac{\tau _0^{\star}}{\tau _i^{\star}}} = \frac{\mu^{\star}}{{{\lambda _i}}}\ln 2, \,\,\,\,\, i=1, \,\, \cdots, \,\,K,}
   \end{equation}
   \begin{equation}\label{App_Eq_WSR_KKT2}
      {\sum\limits_{i = 1}^K {\frac{{{\lambda _i}{\gamma_i}}}{1 + {\gamma _i}\frac{\tau _0^{\star}}{\tau _i^{\star}}}}  = {{{\mu}^{\star}}}\ln 2,}
   \end{equation}
   \begin{equation}\label{App_Eq_WSR_KKT3}
      {\sum\limits_{i = 0}^{K} {\tau _i^{\star}}  = 1,}
   \end{equation}
   where $\mu^{\star} > 0$ is the optimal dual solution. We then obtain (\ref{Eq_WSR_KKT1}) and (\ref{Eq_WSR_KKT2}) by changing variables as
   \begin{equation}\label{App_Eq_WSR_zi}
      {z_i} = {\gamma _i}\frac{\tau _0^{\star}}{\tau _i^{\star}}, \,\,\,\,\, i = 1, \,\, \cdots, K,
   \end{equation}
   in (\ref{App_Eq_WSR_KKT1}) and (\ref{App_Eq_WSR_KKT2}), respectively. It is worth noting that $z_1$ $\cdots$ $z_K$ and $\mu^*$ satisfying both (\ref{Eq_WSR_KKT1}) and (\ref{Eq_WSR_KKT2}) are uniquely determined since $K+1$ variables are solutions of $K+1$ independent equations and $\ln \left( {1 + {z_i}} \right) - \frac{{{z_i}}}{{1 + {z_i}}}$ is a monotonically increasing function of $z_i$. In addition, since $1 - {\tau _0^{\star}} = \sum\limits_{i = 1}^K {{\tau _i^{\star}}}$ from (\ref{App_Eq_WSR_KKT3}) and $\tau_i^{\star} = \tau_0^{\star} \frac{\gamma_i}{z_i}$ from (\ref{App_Eq_WSR_zi}), it follows that
   \begin{equation}\label{App_Eq_WSR_Solution1}
      {{\tau _0^{\star}}\left( {1 + \sum\limits_{i = 1}^K {\frac{{{\gamma_i}}}{{{z_i}}}} } \right) = 1,}
   \end{equation}
   from which we obtain (\ref{Eq_WSR_TimeAlloc_DL}). Finally, we obtain (\ref{Eq_WSR_TimeAlloc_UL}) from (\ref{App_Eq_WSR_zi}) and (\ref{App_Eq_WSR_Solution1}).

   This thus completes the proof of Proposition \ref{Proposition_WeightedSumRate}.

\end{document}